\documentclass[11pt, reco]{article}
\usepackage{amssymb,amscd,amsmath,amsthm}

\usepackage{amsmath, amssymb, amsthm, mathtools}
\usepackage{mathrsfs}
\usepackage{mathrsfs}

\usepackage[colorinlistoftodos]{todonotes}
\usepackage[colorlinks=true, allcolors=blue]{hyperref}
\usepackage{lmodern}
\usepackage[english]{babel}
\usepackage{color}
\usepackage{bm}
\usepackage{graphicx}
\usepackage{tikz}
\usepackage[margin=1in]{geometry}
\newtheorem{theorem}{Theorem}[section]

\newtheorem{corollary}[theorem]{Corollary}
\newtheorem{lemma}[theorem]{Lemma}
\newtheorem{definition}[theorem]{Definition}

\newtheorem{remark}[theorem]{Remark}

 \usepackage{algpseudocode}
 \usepackage{algorithmicx}
 \usepackage{algorithm}

\begin{document}

\title{Causal Architecture in Hidden Quantum Markov Models}

\maketitle
\centerline{ \author{\Large Abdessatar Souissi}}

\centerline{Department of Management Information Systems, College of Business and Economics, }
\centerline{Qassim University, Buraydah 51452, Saudi Arabia}
\centerline{\textit{a.souaissi@qu.edu.sa}}
\vskip0.3cm

\centerline{\author{\Large Abdessatar Barhoumi }}
\centerline{Department of Mathematics and Statistics, College of Science,}
\centerline{  King Faisal University, Al-Ahsa PO.Box: 400, 31982, Saudi Arabia }
\centerline{\textit{abarhoumi@kfu.edu.sa }}
\vskip0.5cm

\begin{abstract}
We introduce a class of causal hidden quantum Markov models (cHQMMs) that reverse the usual order of hidden updates and emissions compared to conventional HQMMs. Using a simple qubit model with a rotating hidden state and sharp measurements, we show that these two architectures—emission then transition versus transition then emission—generally produce different quantum processes. They can be distinguished by measurements at arbitrarily late times, no matter how the hidden system is initialized, and even when the two models start from different initial states. This means that the two orders of operations lead to genuinely different observable behaviors that cannot be reconciled by waiting longer or by choosing special initial conditions. At the same time, we prove that the two architectures become equivalent when they arise from entangled liftings of classical hidden Markov models, sharing the same classical statistics. This identifies a clear dividing line between classical and genuinely quantum hidden memory. Our findings highlight causal HQMMs as a useful tool for studying and distinguishing quantum memory effects in sequential processes.
\end{abstract}

\section{Introduction}
Many time-dependent phenomena are best thought of as being driven by an underlying “memory’’ that keeps track of what happened before and feeds this information forward in time. Hidden Markov models (HMMs) \cite{Rab2002,Mor2021} make this idea precise: a latent chain \((X_n)_{n\in\mathbb{N}}\) plays the role of the memory, and a sequence of observations \((Y_n)_{n\in\mathbb{N}}\) records what we actually measure at each step. At every time \(m\), the update from \(X_{m-1}\) to \((X_m,Y_m)\) factorises as
\[
\mathbb{P}(x_m,y_m\mid x_{m-1})
=
\underbrace{\mathbb{P}(x_m \mid x_{m-1})}_{\text{transition}}
\;\underbrace{\mathbb{P}(y_m \mid x_m)}_{\text{emission}},
\qquad
Y_m\;\perp\!\!\!\perp\;X_{m+1}\,\big\vert\,X_m,
\]
so the hidden chain is exactly the finite-memory mechanism that generates all temporal correlations in the data. Because the transition and emission terms are ordinary conditional probabilities, it makes no practical difference whether we write “transition then emission’’ or “emission then transition’’: the causal content is entirely in these scalar laws, and this commutativity is what keeps classical  HMMs both conceptually simple and computationally efficient \cite{V03,FV05}.

In the quantum setting, the picture becomes substantially more delicate: quantum operations are intrinsically non-commutative, and causal structure has to be formulated directly in terms of completely positive maps, rather than scalar conditional probabilities \cite{WechsEtAl21,WechsEtAl21b}. Several extensions of hidden quantum Markov models (HQMMs) have been developed in this spirit, ranging from many-body and tensor-network descriptions of quantum states to quantum machine-learning models for sequential data and quantum information tasks \cite{Monras11,Srin2017,Clark15}. A fully quantum–probabilistic framework for HQMMs was established in \cite{AGLS24Q}, in which the hidden dynamics is a quantum Markov chain in the sense of \cite{Acc74,ASS20} and the diagonal restriction of the resulting process reproduces a genuine classical HMM. This hybrid structure—quantum memory with a classical “shadow’’—is particularly attractive for near-term implementations, where hybrid quantum–classical architectures and NISQ devices are already being used to simulate noisy channels and to design robust algorithms under realistic noise and resource constraints \cite{David24,De22,Bab22}.

Concretely, the hidden system is described by local matrix algebras \(\mathcal{B}_{H;n}\cong\mathcal{B}(\mathcal{H})\) (with \(\dim\mathcal{H}=N\) and a preferred orthonormal basis \(\{|i\rangle\}_{1\le i\le N}\)), assembled into an infinite tensor product \(\bigotimes_{n\in\mathbb{N}}\mathcal{B}_{H;n}\), and evolves via a family of transition expectations
\[
\mathcal{E}_{H;n}:\mathcal{B}_{H;n}\otimes\mathcal{B}_{H;n+1}\to\mathcal{B}_{H;n},
\]
which are completely positive and unital, and thus admit dual quantum channels \(\mathcal{E}_{H;n}^*:\mathfrak{S}(\mathcal{H})\to\mathfrak{S}(\mathcal{H}\otimes\mathcal{H})\), where \(\mathfrak{S}(\mathcal{H})\) denotes the set of density operators on \(\mathcal{H}\), i.e. all positive semidefinite trace-one operators on \(\mathcal{H}\). The observable system is encoded in local algebras \(\mathcal{B}_{O;n}\cong\mathcal{B}(\mathcal{K})\) (with \(\dim\mathcal{K}=M\) and basis \(\{|e_j\rangle\}_{1\le j\le M}\)), forming \(\bigotimes_{n\in\mathbb{N}}\mathcal{B}_{O;n}\), and couples to the hidden chain through emission expectations
\[
\mathcal{E}_{H,O;n}:\mathcal{B}_{H;n}\otimes\mathcal{B}_{O;n}\to\mathcal{B}_{H;n},
\]
which are again completely positive and unital, with dual channels \(\mathcal{E}_{H,O;n}^*:\mathfrak{S}(\mathcal{H})\to\mathfrak{S}(\mathcal{H}\otimes\mathcal{K})\).

The non-commutativity of the order in which we apply \(\mathcal{E}_{H;n}\) and \(\mathcal{E}_{H,O;n}\) means that the two causal prescriptions—“emission–then–transition’’ and “transition–then–emission’’—are not just two ways of writing the same update, but two genuinely different ways of wiring the same local maps into a one-step HQMM. In \cite{AGLS24Q} the emission–then–transition prescription was analysed in depth, leading to HQMMs with non-trivial entanglement structure \cite{SS23} and to a flexible representation of important classes of matrix product states \cite{Sou25}; related constructions have been shown to capture, in particular, the entanglement and topological features of AKLT-type ground states \cite{SouissiAndolsi25}.

In this paper we compare two causal architectures for HQMMs: emission–then–transition (conventional) and  transition–then–emission (causal). We show that, even with the same local quantum maps $\mathcal{E}_{H;n}$ and $\mathcal{E}_{H,O;n}$, these two time orderings can generate genuinely different dynamics. For the conventional HQMM we take inputs \(X_{H,O}\in \mathcal{B}_{H;n}\otimes\mathcal{B}_{O;n}\) and \(X_{H; n+1}\in\mathcal{B}_{H;n+1}\), and define the emission–then–transition block map
\[
\mathcal{F}^{(n)}(X_{H,O}\otimes X_{n+1})
=
\mathcal{E}_{H;n}\bigl(\mathcal{E}_{H,O;n}(X_{H,O})\otimes X_{H; n+1}\bigr),
\]
while for the causal HQMM we take \(X_{H; n,n+1}\in\mathcal{B}_{H;n}\otimes\mathcal{B}_{H; n+1}\) and \(X_{O}\in\mathcal{B}_{O;n}\), and define the transition–then–emission map
\[
\mathcal{G}^{(n)}(X_{H;n,n+1}\otimes X_{O})
=
\mathcal{E}_{H,O;n}\bigl(\mathcal{E}_{H;n}(X_{H; n,n+1})\otimes X_{O}\bigr).
\]
The adjective “causal’’ is used here in a minimal, operator sense: in \(\mathcal{G}^{(n)}\) the observation at time \(n\) depends on the hidden degrees of freedom only through the updated channel \(\mathcal{E}_{H;n}\), so the output is caused by the post-update hidden configuration, whereas in \(\mathcal{F}^{(n)}\) it is caused by the pre-update one. In a classical HMM, where conditional probabilities commute, this ordering is invisible; in the quantum case the non-commutativity of \(\mathcal{E}_{H;n}\) and \(\mathcal{E}_{H,O;n}\) makes the two wirings genuinely different and, in principle, operationally distinguishable.

To compare the two architectures more concretely, we fix effects in the Heisenberg picture and look at the corresponding dual maps in the Schr\"odinger picture. Let $\mathsf{Eff}(\mathcal{H})$ denote the effect algebra $\{a\in\mathcal{B}(\mathcal{H})\mid 0\le a\le\mathbb{I}\}$ on the hidden Hilbert space and $\mathsf{Eff}(\mathcal{K})$ the set of effects on the output space \cite{MilP07, Kraus83}. For fixed effects $a\in\mathsf{Eff}(\mathcal{H})$ and $b\in\mathsf{Eff}(\mathcal{K})$, inserting $a$ and $b$ in the appropriate legs and then dualising defines one-step hidden channels $\mathcal{F}_{a,b}^{(n)*}$ and $\mathcal{G}_{a,b}^{(n)*}$ acting on hidden states: informally, $\mathcal{F}_{a,b}^{(n)*}$ is the Schr\"odinger dual of the map $X\mapsto\mathcal{F}^{(n)}\bigl((a\otimes b)\otimes X\bigr)$, and $\mathcal{G}_{a,b}^{(n)*}$ is the dual of $X\mapsto\mathcal{G}^{(n)}\bigl((a\otimes X)\otimes b\bigr)$. While these one-step channels already exhibit differences in their Choi–Jamio{\l}kowski operators \cite{Jam72, Choi75}, diamond distances, and entanglement spectra for generic choices of hidden unitaries and non-trivial effects, a much stronger structural separation emerges at the level of the infinite-time joint states. Specifically, we prove in Theorem~\ref{thm:non-quasi-equivalence} that for a minimal qubit model (hidden and output both two-dimensional) \cite{G23, GMM25, YILSDH15} with hidden unitary $U = \exp(-\mathrm{i}\frac{\theta}{2}\sigma_x)$ ($0<|\theta|<\pi$) and sharp measurement in the computational basis, the conventional and causal HQMMs produce joint states $\varphi_{H,O}$ and $\psi_{H,O}$ that are not quasi-equivalent in the sense of Bratteli–Robinson \cite{BR}. This holds for arbitrary initial hidden states $\phi_{H,0}^{\mathrm{conv}}$ and $\phi_{H,0}^{\mathrm{caus}}$ (not necessarily equal). More precisely, there exists a positive constant $\delta > 0$, independent of the observable, such that for every finite time $N_0$ one can find a local observable $A$ supported entirely after $N_0$ with $\|A\| = 1$ and
\begin{equation*}
|\varphi_{H,O}(A) - \psi_{H,O}(A)| = \delta
\end{equation*}
This violates the  asymptotic agreement criterion for quasi-equivalence \cite{BR}, which requires that for every $\varepsilon > 0$ there exists $N_0$ such that all observables supported after $N_0$ have expectation differences less than $\varepsilon$. The result transcends mere differences in one-step channels or finite-time correlation functions; it demonstrates that the two architectures remain distinguishable at arbitrarily late times, irrespective of the choice of initial hidden states, and that no measurement strategy confined to any finite time window can perfectly emulate the other model's future predictions. Thus, the conventional and causal HQMMs define genuinely distinct classes of non-commutative stochastic processes with asymptotically separated joint states in the operator-algebraic sense.

A particularly clear situation where the two causal stories agree at the classical level is provided by entangled hidden Markov models \cite{SS23}. Starting from a classical HMM \(\lambda=(\boldsymbol{\pi},\mathbf{\Pi},\mathbf{Q})\) with hidden state set \(\mathbb{I}_H\), output set \(\mathbb{I}_O\), transition matrices \(\Pi_n=(\Pi_{n;ij})\) and emission kernels \(Q^{(n)}_j(k)\), we realize the hidden system on \(\mathcal{H}\) with basis \(\{|i\rangle\}_{i\in\mathbb{I}_H}\) and outputs on \(\mathcal{K}\) with basis \(\{|e_k\rangle\}_{k\in\mathbb{I}_O}\). The   transitions of the hidden entangled Markov chains \cite{AccFid05, AccMaOhy06} are encoded by a partial isometry \(V_{H;n}:\mathcal{H}\to\mathcal{H}\otimes\mathcal{H}\), \(V_{H;n}|i\rangle=\sum_{j\in\mathbb{I}_H}\sqrt{\Pi_{n;ij}}\,|i,j\rangle\), inducing \(\mathcal{E}_{H;n}(X)=V_{H;n}^{*}X V_{H;n}\); the emissions are encoded by \(V_{H,O;n}:\mathcal{H}\to\mathcal{H}\otimes\mathcal{K}\), \(V_{H,O;n}|j\rangle=\sum_{k\in\mathbb{I}_O}\sqrt{Q^{(n)}_j(k)}\,|j,e_k\rangle\), inducing \(\mathcal{E}_{H,O;n}(Y)=V_{H,O;n}^{*}Y V_{H,O;n}\). These expectations preserve the diagonal matrix units in the preferred bases and reproduce exactly the classical transition and emission probabilities on diagonals. In this entangled HMM setting, one checks on diagonal tensors such as \(|i\rangle\langle i|\otimes|e_k\rangle\langle e_k|\) and their time-shifted versions that \(\mathcal{F}^{(n)}\) and \(\mathcal{G}^{(n)}\) act identically on all diagonal observables, so the associated conventional and causal HQMMs induce the same classical law on the diagonal subalgebra, even though they still define distinct quantum processes on off-diagonal operators.

From a quantum-memory perspective, the way we wire the hidden and emission maps determines how information is stored, propagated, and accessed over time. In topological quantum memories, for example, information is encoded in protected global degrees of freedom and becomes accessible only through carefully structured operations, highlighting how storage and readout are constrained by the underlying dynamics \cite{DenisKit02,FLW24}. Recent work on hidden quantum memory makes a similar point in a different setting: even when the observed statistics appear Markovian, the generating process can still rely on genuinely quantum memory resources that are only revealed by suitable probes \cite{TaElM,Elliot21}.

Viewed in this light, our two causal architectures are not just alternative parameterisations of a model, but two distinct ways of embedding and interrogating finite quantum memory. They provide explicit, operator-level mechanisms for deciding when and how past information is written into the hidden system and when it can be read out through the observations, and thus tie directly into broader efforts to design efficient, robust memory in quantum processes and adaptive agents \cite{Elliott22,SE26,ZBB25}.

The paper is organised as follows. Section \ref{Sec-Pre} reviews  basic tools from   operator algebras and quantum information. Section \ref{Sec-CausalHQMM} introduces causal HQMMs and discuss the two causal architectures. Section \ref{Sec-Qubit} presents a minimal qubit HQMM separating these architectures at the channel level. Section \ref{Sec-EHMM} studies entangled HMMs and identifies a regime where the architectures agree. Section \ref{Sec-disc} discusses implications and future directions.

\section{Preliminaries}\label{Sec-Pre}
Let \(\mathcal{A}\) and \(\mathcal{B}\) be finite-dimensional C\(^*\)-algebras.
A linear map \(\mathcal{E}:\mathcal{A} \to\mathcal{B} \) is called positive if \(\mathcal{E}(X)\ge 0\) whenever \(X\ge 0\). It is called \emph{completely positive} (CP) if, for every \(n\in\mathbb{N}\), its matrix amplification \(\mathcal{E}^{(n)}:M_n(\mathcal{A})\to M_n(\mathcal{B})\), defined by \(\mathcal{E}^{(n)}([X_{ij}]) = [\mathcal{E}(X_{ij})]\), is positive.
 A linear map \(\mathcal{E}:\mathcal{A}\otimes\mathcal{B}\to\mathcal{A}\) is called a \emph{transition expectation} if it is completely positive and normalised, in the sense that \(\mathcal{E}(\mathbb{I}_{\mathcal{A}}\otimes\mathbb{I}_{\mathcal{B}})=\mathbb{I}_{\mathcal{A}}\). This notion, introduced in the operator-algebraic formulation of QMCs  \cite{Acc74,ASS20}. One may regard \(\mathcal{E}\) as a non-commutative analogue of a conditional expectation with respect to the second tensor component: it “integrates out’’ the algebra \(\mathcal{B}\) while preserving the unit and the order structure of \(\mathcal{A}\).

 Throughout, all Hilbert spaces are complex and finite-dimensional. For a Hilbert space \(\mathcal{H}\), we denote by \(\mathcal{B}(\mathcal{H})\) the C\(^*\)-algebra of all bounded linear operators on \(\mathcal{H}\).
 In the concrete case \(\mathcal{A}=\mathcal{B}(\mathcal{H})\) and \(\mathcal{B}=\mathcal{B}(\mathcal{K})\), a transition expectation \(\mathcal{E}:\mathcal{B}(\mathcal{H})\otimes\mathcal{B}(\mathcal{K})\to\mathcal{B}(\mathcal{H})\) can be written, in finite dimensions, in Kraus form as
\[\mathcal{E}(X)=\sum_{i=1}^r V_i^* X V_i,\quad \forall X\in\mathcal{B}(\mathcal{H}\otimes\mathcal{K})
\]
where the Kraus operators \(V_i:\mathcal{H}\to\mathcal{H}\otimes\mathcal{K}\) satisfy \(\sum_i V_i^*V_i=\mathbb{I}_{\mathcal{H}}\),
where the Kraus operators \(V_i:\mathcal{K}\to\mathcal{H}\) satisfy \(\sum_i V_i^*V_i=\mathbb{I}_{\mathcal{H}}\). We distinguish density operators, subnormalised states, and effects using the L\"owner order on self-adjoint operators. For $A,B\in\mathcal{B}(\mathcal{H})_{\mathrm{sa}}$ we write
\[
A \preceq B
\quad\text{iff}\quad
B-A \succeq 0,
\]
that is, $B-A$ is positive semidefinite.

The set of (normalised) density operators on $\mathcal{H}$ is
\[
\mathfrak{S}(\mathcal{H})
:=
\bigl\{\rho\in\mathcal{B}(\mathcal{H})
\;\big|\;
\rho \succeq 0,\ \operatorname{Tr}(\rho)=1\bigr\},
\]
and represents the quantum states of $\mathcal{H}$. The \emph{effect set} on $\mathcal{H}$ is defined by
\[
\mathsf{Eff}(\mathcal{H})
:=
\bigl\{e\in\mathcal{B}(\mathcal{H})
\;\big|\;
0 \preceq e \preceq \mathbb{I}_{\mathcal{H}}\bigr\},
\]
and its elements represent quantum events, i.e.\ yes–no measurements with outcome probabilities $\operatorname{Tr}(\rho e)$ for $\rho\in\mathfrak{S}(\mathcal{H})$.
The Hilbert–Schmidt inner product \(\langle\rho,X\rangle=\operatorname{Tr}(\rho X)\) on \(\mathfrak{S}(\mathcal{H})\times\mathcal{B}(\mathcal{H})\) induces a unique adjoint map \(\mathcal{E}_*:\mathfrak{S}(\mathcal{H})\to\mathfrak{S}(\mathcal{H}\otimes\mathcal{K})\) defined by the single duality relation
\begin{equation}\label{eq_HS}
\operatorname{Tr}(\mathcal{E}_*(\rho)X)=\operatorname{Tr}(\rho\,\mathcal{E}(X))
\end{equation}
for all \(\rho\in\mathfrak{S}(\mathcal{H})\) and all \(X\in\mathcal{B}(\mathcal{H}\otimes\mathcal{K})\). In Kraus form this adjoint is \(\mathcal{E}_*(\rho)=\sum_i V_i \rho V_i^*\), which is a completely positive, trace-preserving map on states.

At the level of quantum information, a channel \(\Phi:\mathfrak{S}(\mathcal{H})\to\mathfrak{S}(\mathcal{H}')\) represents a causal influence from the input system \(\mathcal{H}\) to the output system \(\mathfrak{S}(\mathcal{H}')\): interventions on the input density matrix can change the statistics of all future measurements on the enlarged system, but not conversely. The dual map \(\mathcal{E}\) propagates observables backwards along this causal arrow, implementing the Heisenberg-picture dynamics of a quantum Markov chain.

We now specialise to the setting of hidden quantum Markov models. Let \(\mathcal{H}\) be an \(N\)-dimensional Hilbert space representing the hidden, or internal, degrees of freedom. We fix an orthonormal basis \(\{|j\rangle : j\in\mathbb{I}_H\}\) with \(\mathbb{I}_H=\{1,\dots,N\}\), and denote the associated hidden observable algebra by
\(
\mathcal{B}_H := \mathcal{B}(\mathcal{H}).
\)
Similarly, let \(\mathcal{K}\) be an \(M\)-dimensional Hilbert space modelling the output or observation space, with orthonormal basis \(\{|e_k\rangle : k\in\mathbb{I}_O\}\), \(\mathbb{I}_O=\{1,\dots,M\}\), and observable algebra
\(
\mathcal{B}_O := \mathcal{B}(\mathcal{K}).
\)
Heuristically, \(\mathcal{B}_H\) contains the “latent quantum causes” and \(\mathcal{B}_O\) contains the “observable effects” in the sense of quantum cause–effect identification.

Time is discrete and indexed by \(\mathbb{N}=\{0,1,2,\dots\}\). At each time \(n\in\mathbb{N}\), we attach Hilbert spaces \(\mathcal{H}_n\simeq\mathcal{H}\) and \(\mathcal{K}_n\simeq\mathcal{K}\) with observable algebras
\[
\mathcal{B}_{H;n} := \mathcal{B}(\mathcal{H}_n)\simeq\mathcal{B}_H,\qquad
\mathcal{B}_{O;n} := \mathcal{B}(\mathcal{K}_n)\simeq\mathcal{B}_O.
\]
The local hidden–output algebra at time \(n\) is the spatial tensor product
\(\mathcal{B}_{H,O;n} := \mathcal{B}_{H;n}\otimes\mathcal{B}_{O;n},\)
which encodes all observables acting jointly on the hidden subsystem and its associated output at that instant.
For a finite time interval \([0,n]=\{0,1,\dots,n\}\), we consider the tensor product algebras
\[
\mathcal{B}_{H;[0,n]} := \bigotimes_{m=0}^n \mathcal{B}_{H;m},\qquad
\mathcal{B}_{O;[0,n]} := \bigotimes_{m=0}^n \mathcal{B}_{O;m},\qquad
\mathcal{B}_{H,O;[0,n]} := \bigotimes_{m=0}^n \mathcal{B}_{H,O;m},
\]
describing, respectively, the hidden history, the observable history, and the full hidden–observable history over \([0,n]\). The corresponding local (finitely supported) algebras are the algebraic inductive limits
\[
\mathcal{B}_{H;\mathrm{loc}} := \bigcup_{n\ge 0} \mathcal{B}_{H;[0,n]},\quad
\mathcal{B}_{O;\mathrm{loc}} := \bigcup_{n\ge 0} \mathcal{B}_{O;[0,n]},\quad
\mathcal{B}_{H,O;\mathrm{loc}} := \bigcup_{n\ge 0} \mathcal{B}_{H,O;[0,n]}.
\]
Their norm closures in the operator norm yield the quasi-local C\(^*\)-algebras
\[
\mathcal{B}_{H;\mathbb{N}} :=  \bigotimes_{n\in\mathbb{N}}\mathcal{B}_{H;n},\quad
\mathcal{B}_{O;\mathbb{N}} := \bigotimes_{n\in\mathbb{N}}\mathcal{B}_{O;n},\quad
\mathcal{B}_{H,O;\mathbb{N}} := \bigotimes_{n\in\mathbb{N}}\mathcal{B}_{H,O;n}.
\]
\begin{definition}\label{def:quasi-equivalence}
Let $\mathcal{A} = \overline{\bigcup_{\Lambda \Subset L} \mathcal{A}_\Lambda}^{\|\cdot\|}$ be a quasi-local algebra over a countable index set $L$, where $(\mathcal{A}_\Lambda)_{\Lambda \Subset L}$ is an inductive family of finite-dimensional C$^*$-algebras with natural embeddings for $\Lambda \subset \Lambda'$.
Two states $\varphi_1$ and $\varphi_2$ on $\mathcal{A}$ are called \emph{quasi-equivalent} if for every $\varepsilon > 0$ there exists a finite region $\Lambda_0 \Subset L$ such that for all local observables $a \in \mathcal{A}_\Lambda$ with $\Lambda \cap \Lambda_0 = \varnothing$, the following inequality holds:
\begin{equation}\label{eq:quasi-equiv-condition}
|\varphi_1(a) - \varphi_2(a)| < \varepsilon \|a\|.
\end{equation}
\end{definition}

\begin{remark} 
Conversely, two states $\varphi_1$ and $\varphi_2$ are \emph{not quasi-equivalent} if there exists some $\varepsilon_0 > 0$ such that for every finite region $\Lambda_0 \Subset L$, one can find a local observable $a \in \mathcal{A}_\Lambda$ with $\Lambda \cap \Lambda_0 = \varnothing$ (i.e., $a$ supported arbitrarily far from $\Lambda_0$) satisfying
\[
|\varphi_1(a) - \varphi_2(a)| \ge \varepsilon_0 \|a\|.
\]
In other words, the two states remain distinguishable by measurements performed arbitrarily far away from any bounded region, no matter how large that region is. This typically occurs when $\varphi_1$ and $\varphi_2$ are disjoint (Inequivalent)  states, such as pure phases in a thermodynamic system that are not unitarily equivalent.
\end{remark}

The reader is referred to \cite{BR} for a detailed presentation of quasi-local algebras.
 Within this framework, a hidden quantum Markov model is specified by an initial hidden state \(\phi_{H,0}:\mathcal{B}_H\to\mathbb{C}\) and two families of transition expectations that describe, respectively, the internal propagation of the hidden chain and its causal influence on the output. The hidden dynamics is given by CPIP maps \(\mathcal{E}_{H;n}:\mathcal{B}_{H;n}\otimes\mathcal{B}_{H;n+1}\to\mathcal{B}_{H;n}\) for \(n\ge 0\). Via Hilbert–Schmidt duality, each \(\mathcal{E}_{H;n}\) induces a channel \((\mathcal{E}_{H;n})_*:\mathfrak{S}(\mathcal{H}_n)\to\mathfrak{S}(\mathcal{H}_n\otimes\mathcal{H}_{n+1})\) defined by \(\operatorname{Tr}((\mathcal{E}_{H;n})_*(\rho_{H;n})X)=\operatorname{Tr}(\rho_{H;n}\,\mathcal{E}_{H;n}(X))\) for all \(\rho_{H;n}\in\mathfrak{S}(\mathcal{H}_n)\), \(X\in\mathcal{B}(\mathcal{H}_n\otimes\mathcal{H}_{n+1})\).

In this subsection we recast the comparison between the conventional and causal architectures at a fixed time-step in the language of quantum channel discrimination. We begin with a rigorous definition of the diamond distance between two completely positive trace-preserving (CPTP) maps and then specialize to the dual one-step hidden channels induced by the block maps of the two architectures. This allows us to quantify, in a fully operational way, how distinguishable the two temporal orders are, in the sense of optimal one-shot channel discrimination \cite{PLLP19}.

\begin{definition}[Diamond distance]
Let $\mathcal{H}_{\mathrm{in}}$ and $\mathcal{H}_{\mathrm{out}}$ be finite-dimensional Hilbert spaces and let
\[
\Phi_1,\Phi_2:\mathcal{B}(\mathcal{H}_{\mathrm{in}})\to\mathcal{B}(\mathcal{H}_{\mathrm{out}})
\]
be quantum channels (normal, completely positive and trace-preserving). The \emph{diamond distance} between $\Phi_1$ and $\Phi_2$ is defined by
\[
\|\Phi_1-\Phi_2\|_{\diamond}
:=
\sup_{d\in\mathbb{N}}
\;\sup_{\rho\in\mathfrak{S}(\mathcal{H}_{\mathrm{in}}\otimes\mathbb{C}^d)}
\big\|
(\Phi_1\otimes\mathrm{id}_d)(\rho)
-
(\Phi_2\otimes\mathrm{id}_d)(\rho)
\big\|_1
\]
where $\|\cdot\|_1$ denotes the trace norm and $\mathrm{id}_d$ is the identity channel on $\mathcal{B}(\mathbb{C}^d)$. For finite-dimensional $\mathcal{H}_{\mathrm{in}}$ the supremum may be restricted to $d=\dim\mathcal{H}_{\mathrm{in}}$.
\end{definition}

The diamond distance is the canonical measure of single-use distinguishability of quantum channels. If one is given a single use of an unknown channel, promised to be either $\Phi_1$ or $\Phi_2$ with equal a priori probabilities, then the optimal success probability $p_{\mathrm{succ}}^{(1)}$ in identifying the channel satisfies
\[
p_{\mathrm{succ}}^{(1)}
=
\frac12+\frac14\,\|\Phi_1-\Phi_2\|_{\diamond},
\]
so that $\|\Phi_1-\Phi_2\|_{\diamond}=0$ if and only if $\Phi_1$ and $\Phi_2$ are operationally indistinguishable in one use.

\medskip

We next introduce the Choi--Jamio{\l}kowski representation of a channel and recall its relation to the diamond norm. Fix an orthonormal basis $\{|i\rangle\}_{i=1}^{d_{\mathrm{in}}}$ of $\mathcal{H}_{\mathrm{in}}$ and consider the (unnormalised) maximally entangled vector
\[
|\Omega\rangle
:=
\sum_{i=1}^{d_{\mathrm{in}}}
|i\rangle\otimes|i\rangle
\;\in\;
\mathcal{H}_{\mathrm{in}}\otimes\mathcal{H}_{\mathrm{in}}.
\]

\begin{definition}[Choi operator]
For a linear map
\(
\Phi:\mathcal{B}(\mathcal{H}_{\mathrm{in}})\to\mathcal{B}(\mathcal{H}_{\mathrm{out}})
\),
its Choi operator is
\[
J(\Phi)
:=
(\mathrm{id}\otimes\Phi)\bigl(|\Omega\rangle\langle\Omega|\bigr)
\;\in\;
\mathcal{B}(\mathcal{H}_{\mathrm{in}}\otimes\mathcal{H}_{\mathrm{out}}).
\]
If $\Phi$ is CPTP, then $J(\Phi)\geq 0$ and
\(
\operatorname{Tr}_{\mathrm{out}} J(\Phi)=\mathbb{I}_{\mathcal{H}_{\mathrm{in}}}
\).

\end{definition}

The assignment $\Phi\mapsto J(\Phi)$ is injective and affine, and there are well-known bounds relating the diamond norm of $\Phi_1-\Phi_2$ to the trace norm of the Choi difference $J(\Phi_1)-J(\Phi_2)$. In particular, one has
\begin{equation}\label{eq:diamond-choi-bounds}
\frac{1}{d_{\mathrm{in}}}\,
\bigl\|J(\Phi_1)-J(\Phi_2)\bigr\|_1
\;\le\;
\|\Phi_1-\Phi_2\|_{\diamond}
\;\le\;
\bigl\|J(\Phi_1)-J(\Phi_2)\bigr\|_1
\end{equation}
where $d_{\mathrm{in}}=\dim\mathcal{H}_{\mathrm{in}}$. The lower bound may be viewed as a consequence of evaluating the supremum in the diamond norm on suitably chosen input states that are close to maximally entangled, while the upper bound follows from an operator-space duality argument.

\section{Causal hidden quantum Markov models}\label{Sec-CausalHQMM}
The coupling between hidden and observable degrees of freedom is encoded by emission expectations \(\mathcal{E}_{H,O;n}:\mathcal{B}_{H;n}\otimes\mathcal{B}_{O;n}\to\mathcal{B}_{H;n}\). For \(b_n\in\mathcal{B}_{O;n}\), the dual map \((\mathcal{E}_{H,O;n})_*:\mathfrak{S}(\mathcal{H}_n)\to\mathfrak{S}(\mathcal{H}_n\otimes\mathcal{K}_n)\) is the quantum operation that produces an output system \(\mathcal{K}_n\) conditioned on the hidden state at time \(n\), and formalises the causal arrow \(H_n\to O_n\). Tracing out the hidden algebra yields an induced process on \(\mathcal{B}_{O;\mathbb{N}}\) whose temporal correlations may violate classical Markov conditions and appear “non-Markovian’’; in the process-tensor viewpoint \cite{MilzStrasberg21} this non-Markovianity can be understood as the imprint of genuine quantum memory living in \(\mathcal{B}_{H;\mathbb{N}}\) and propagating via the \(\mathcal{E}_{H;n}\).

Given local elements \(a_n\in\mathcal{B}_{H;n}\), \(a_{n+1}\in\mathcal{B}_{H;n+1}\), and \(b_n\in\mathcal{B}_{O;n}\), the one-step map
\begin{equation}\label{eq_F}
\mathcal{F}_{a_n,b_n}^{(n)}(a_{n+1})
:=
\mathcal{E}_{H;n}\bigl(\mathcal{E}_{H,O;n}(a_n\otimes b_n)\otimes a_{n+1}\bigr),
\end{equation}
realises the conventional HQMM causal order: the hidden system at time \(n\) first influences the output at time \(n\) through \(\mathcal{E}_{H,O;n}\), then is propagated internally to time \(n+1\) via \(\mathcal{E}_{H;n}\). The alternative one-step map
\begin{equation}\label{eq_G}
\mathcal{G}_{a_n,b_n}^{(n)}(a_{n+1})
:=
{\mathcal{E}}_{H,O;n}\bigl(\mathcal{E}_{H;n}(a_n\otimes a_{n+1})\otimes b_n\bigr),
\end{equation}
 built from a second family \(\widetilde{\mathcal{E}}_{H,O;n}\), swaps this order: the hidden chain first advances from \(H_n\) to \(H_{n+1}\) and only the updated hidden configuration acts as the cause for the observation at time \(n\). In both cases, composing the maps \(\mathcal{F}_{a_n,b_n}^{(n)}\) or \(\mathcal{G}_{a_n,b_n}^{(n)}\) and evaluating with \(\phi_{H,0}\) produces a state on \(\mathcal{B}_{H,O;\mathbb{N}}\), and the restriction to \(\mathcal{B}_{O;\mathbb{N}}\) is an output process whose temporal structure reflects the chosen causal order.

 The next pair of diagrams summarises these two one-step updates in a way that makes this causal structure manifest. For clarity and symmetry, the conventional diagram is placed in the upper-left, while the causal diagram is placed in the lower-right; the red and blue blocks are chosen so that the "role" of each block is exchanged between the two diagrams.

\begin{center}
\begin{tikzpicture}[thick, scale=0.55, >=stealth]

    \node[font=\small] at (2.5,3.5) {\textbf{Conventional:} emit $\rightarrow$ transition};

    \draw (0,2.5) -- (2.5,2.5);
    \draw (0,0.5) -- (2.5,0.5);
    \node[left, font=\scriptsize] at (0,2.5) {$a_n$};
    \node[left, font=\scriptsize] at (0,0.5) {$b_n$};

    \draw[fill=blue!15, rounded corners=2pt] (2.5,0) rectangle (4.5,3);
    \node[font=\scriptsize] at (3.5,1.5) {$\mathcal{E}_{H,O;n}$};

    \draw (4.5,1.5) -- (6,1.5);

    \draw (0,-1.5) -- (6,-1.5);
    \node[left, font=\scriptsize] at (0,-1.5) {$a_{n+1}$};

    \draw[fill=red!15, rounded corners=2pt] (6,-2) rectangle (8,3);
    \node[font=\scriptsize] at (7,0.5) {$\mathcal{E}_{H;n}$};

    \draw[->] (8,0.5) -- (10,0.5);
    \node[right, font=\scriptsize] at (10,0.5) {$\mathcal{F}_{a_n,b_n}^{(n)}$};

    \node[font=\small] at (14.5,-2.0) {\textbf{Causal:} transition $\rightarrow$ emit};

    \draw (12,-3) -- (14.5,-3);
    \draw (12,-5) -- (14.5,-5);
    \node[left, font=\scriptsize] at (12,-3) {$a_n$};
    \node[left, font=\scriptsize] at (12,-5) {$a_{n+1}$};

    \draw[fill=red!15, rounded corners=2pt] (14.5,-5.5) rectangle (16.5,-2.5);
    \node[font=\scriptsize] at (15.5,-4) {$\mathcal{E}_{H;n}$};

    \draw (16.5,-4) -- (18,-4);

    \draw (12,-7) -- (18,-7);
    \node[left, font=\scriptsize] at (12,-7) {$b_n$};

    \draw[fill=blue!15, rounded corners=2pt] (18,-7.5) rectangle (20,-2.5);
    \node[font=\scriptsize] at (19,-5) {${\mathcal{E}}_{H,O;n}$};

    \draw[->] (20,-5) -- (22,-5);
    \node[right, font=\scriptsize] at (22,-5) {$\mathcal{G}_{a_n,b_n}^{(n)}$};

\end{tikzpicture}
\end{center}

Although the two architectures are built from the same families of completely positive maps, they implement distinct causal relationships between hidden and observable subsystems. In particular, since \(\mathcal{E}_{H;n}\) and \(\mathcal{E}_{H,O;n}\)  do not commute in general, the one-step maps \(\mathcal{F}_{a_n,b_n}^{(n)}\) and \(\mathcal{G}_{a_n,b_n}^{(n)}\) give rise to non-equivalent hidden quantum Markov processes.

\begin{definition}\cite{AGLS24Q}[Conventional hidden quantum Markov model]
A \emph{conventional hidden quantum Markov model} (conventional HQMM) is a quadruple
\[
\Xi_{\mathrm{conv}}
=
\bigl(\phi_{H,0},(\mathcal{E}_{H;n})_{n\ge 0},(\mathcal{E}_{H,O;n})_{n\ge 0},(\mathcal{F}^{(n)})_{n\ge 0}\bigr),
\]
where:
\begin{itemize}
\item \(\phi_{H,0}:\mathcal{B}_H\to\mathbb{C}\) is an initial state on the hidden algebra;
\item \(\mathcal{E}_{H;n}:\mathcal{B}_{H;n}\otimes\mathcal{B}_{H;n+1}\to\mathcal{B}_{H;n}\) are hidden transition expectations;
\item \(\mathcal{E}_{H,O;n}:\mathcal{B}_{H;n}\otimes\mathcal{B}_{O;n}\to\mathcal{B}_{H;n}\) are emission expectations;
\item for each time \(n\), the one-step block map
\[
\mathcal{F}^{(n)}:\mathcal{B}_{H;n}\otimes\mathcal{B}_{O;n}\otimes\mathcal{B}_{H;n+1}\to\mathcal{B}_{H;n}
\]
is defined, for local inputs \(a_n\in\mathcal{B}_{H;n}\), \(b_n\in\mathcal{B}_{O;n}\), \(X\in\mathcal{B}_{H;n+1}\), by
\[
\mathcal{F}^{(n)}_{a_n,b_n}(X)
:=
\mathcal{E}_{H;n}\bigl(\mathcal{E}_{H,O;n}(a_n\otimes b_n)\otimes X\bigr).
\]
\end{itemize}
For a local tensor observable \(\bigotimes_{m=0}^n (a_m\otimes b_m)\in\mathcal{B}_{H,O;[0,n]}\), the finite-time joint expectation of the conventional HQMM is given by
\begin{equation}\label{eq:conv-state-no-sup}
\varphi_{H,O}\!\left(\bigotimes_{m=0}^n (a_m\otimes b_m)\right)
:=
\phi_{H,0}\circ
\mathcal{F}^{(0)}_{a_0,b_0}\circ\mathcal{F}^{(1)}_{a_1,b_1}\circ\cdots\circ
\mathcal{F}^{(n)}_{a_n,b_n}(\mathbb{I}_{H;n+1})
\end{equation}
These linear functionals are compatible for different \(n\) and extend uniquely (by the usual projective-limit construction) to a state \(\varphi_{H,O}\) on the quasi-local algebra \(\mathcal{B}_{H,O;\mathbb{N}}\). The corresponding hidden and observable marginals are defined by restriction:
\[
\varphi_H
:=
\varphi_{H,O}\!\restriction_{\mathcal{B}_{H;\mathbb{N}}},\qquad
\varphi_O
:=
\varphi_{H,O}\!\restriction_{\mathcal{B}_{O;\mathbb{N}}}
\]
\end{definition}

\begin{definition}[Causal hidden quantum Markov model]
A \emph{causal hidden quantum Markov model} (causal HQMM) is a quadruple
\[
\Xi_{\mathrm{caus}}
=
\bigl(\phi_{H,0},(\mathcal{E}_{H;n})_{n\ge 0},(\mathcal{E}_{H,O;n})_{n\ge 0},(\mathcal{G}^{(n)})_{n\ge 0}\bigr),
\]
with the same initial hidden state \(\phi_{H,0}\) and the same families of hidden transition and emission expectations \((\mathcal{E}_{H;n})_{n\ge 0}\), \((\mathcal{E}_{H,O;n})_{n\ge 0}\) as above, but a different one-step composition rule encoded in the block maps \(\mathcal{G}^{(n)}\). For each \(n\), the block map
\[
\mathcal{G}^{(n)}:\mathcal{B}_{H;n}\otimes\mathcal{B}_{O;n}\otimes\mathcal{B}_{H;n+1}\to\mathcal{B}_{H;n}
\]
is defined, for \(a_n\in\mathcal{B}_{H;n}\), \(b_n\in\mathcal{B}_{O;n}\), \(X\in\mathcal{B}_{H;n+1}\), by
\[
\mathcal{G}^{(n)}_{a_n,b_n}(X)
:=
\mathcal{E}_{H,O;n}\bigl(\mathcal{E}_{H;n}(a_n\otimes X)\otimes b_n\bigr).
\]
For a local tensor \(\bigotimes_{m=0}^n (a_m\otimes b_m)\in\mathcal{B}_{H,O;[0,n]}\), the finite-time joint expectation of the causal HQMM is then
\begin{equation}\label{eq:caus-state-no-sup}
\psi_{H,O}\!\left(\bigotimes_{m=0}^n (a_m\otimes b_m)\right)
:=
\phi_{H,0}\circ
\mathcal{G}^{(0)}_{a_0,b_0}\circ\mathcal{G}^{(1)}_{a_1,b_1}\circ\cdots\circ
\mathcal{G}^{(n)}_{a_n,b_n}(\mathbb{I}_{H;n+1}),
\end{equation}
which again extends uniquely to a state \(\psi_{H,O}\) on \(\mathcal{B}_{H,O;\mathbb{N}}\). The associated hidden and observable marginals are
\[
\psi_H
:=
\psi_{H,O}\!\restriction_{\mathcal{B}_{H;\mathbb{N}}},\qquad
\psi_O
:=
\psi_{H,O}\!\restriction_{\mathcal{B}_{O;\mathbb{N}}}.
\]
\end{definition}

From the perspective of quantum causal models, the “conventional’’ and “causal’’ HQMM architectures correspond to two inequivalent ways of realising the same underlying static causal graph at the level of channels: in the conventional case the edge \(H_n\to O_n\) is implemented before \(H_n\to H_{n+1}\), whereas in the causal case \(H_n\to H_{n+1}\) is implemented first. Since \(\mathcal{E}_{H;n}\) and \(\mathcal{E}_{H,O;n}\) need not commute, the two resulting processes are generally different. Yet in both architectures, apparent non-Markovianity in the output algebra \(\mathcal{B}_{O;\mathbb{N}}\) can be interpreted as the projection of a Markovian quantum causal dynamics on \(\mathcal{B}_{H,O;\mathbb{N}}\) onto the observable sector, with the hidden algebra \(\mathcal{B}_{H;\mathbb{N}}\) carrying the quantum memory that mediates the causal influence from past hidden configurations to future observations.  This is the HQMM analogue of the no-retrocausality and faithfulness assumptions in quantum causal discovery \cite{GiarmatziCosta18}.

In both cases, the triplet \((\phi_{H,0},(\mathcal{E}_{H;n}),(\mathcal{E}_{H,O;n}))\) fixes the local building blocks (initial state, hidden transitions, and emission expectations), while the families \((\mathcal{F}^{(n)})_{n\ge 0}\) and \((\mathcal{G}^{(n)})_{n\ge 0}\) specify how these blocks are composed in time. It is precisely this choice of block maps—emission–then–transition in the conventional case, transition–then–emission in the causal case—that distinguishes the two HQMM architectures and leads to different joint states \(\varphi_{H,O}\), \(\psi_{H,O}\) and different marginals on the hidden and observable algebras.

\begin{lemma}\label{lem:dual-block-maps} 
Fix a time step $n$ and suppose that the hidden transition and emission expectations admit minimal Kraus decompositions
\[
\mathcal{E}_{H;n}(X)
=
\sum_{\alpha=1}^{r_H} K_{H;\alpha}^{*} X K_{H;\alpha},
\qquad
\mathcal{E}_{H,O;n}(Y)
=
\sum_{\beta=1}^{r_O} K_{H,O;\beta}^{*} Y K_{H,O;\beta},
\]
where $K_{H;\alpha}:\mathcal{H}_n\to\mathcal{H}_n\otimes\mathcal{H}_{n+1}$ and $K_{H,O;\beta}:\mathcal{H}_n\to\mathcal{H}_n\otimes\mathcal{K}_n$ satisfy
\(
\sum_{\alpha}K_{H;\alpha}^{*}K_{H;\alpha}
=
\sum_{\beta}K_{H,O;\beta}^{*}K_{H,O;\beta}
=
\mathbb{I}_{\mathcal{H}_n}
\).
Let $\mathcal{F}_{a,b}^{(n)}$ and $\mathcal{G}_{a,b}^{(n)}$ be the block maps defined in \eqref{eq_F} and \eqref{eq_G}, and let
\[
\mathcal{F}_{a,b}^{(n)*},\mathcal{G}_{a,b}^{(n)*}:\mathfrak{S}_{\le 1}(\mathcal{H}_n)\to\mathfrak{S}(\mathcal{H}_{n+1})
\]
denote their Schr\"odinger-picture duals with respect to the Hilbert–Schmidt pairing \eqref{eq_HS}. Then, for every $\rho\in\mathfrak{S}(\mathcal{H}_n)$,
\begin{align} \label{Fab*}
\mathcal{F}_{a,b}^{(n)*}(\rho)
&=
\sum_{\alpha=1}^{r_H}
\operatorname{Tr}_{\mathcal{H}_n}\Bigl[
K_{H;\alpha}\,\rho\,K_{H;\alpha}^{*}
\bigl(\mathcal{E}_{H,O;n}(a\otimes b)\otimes\mathbb{I}_{\mathcal{H}_{n+1}}\bigr)
\Bigr]\nonumber\\[0.4em]
&=
\sum_{\alpha=1}^{r_H}\sum_{\beta=1}^{r_O}
\operatorname{Tr}_{\mathcal{H}_n}\Bigl[
K_{H;\alpha}\,\rho\,K_{H;\alpha}^{*}
\Bigl(
 K_{H,O;\beta}^{*}(a\otimes b)K_{H,O;\beta}
\otimes\mathbb{I}_{\mathcal{H}_{n+1}}
\Bigr)
\Bigr]\\[0.4em]
\label{Gab*}\mathcal{G}_{a,b}^{(n)*}(\rho)
&=
\sum_{\beta=1}^{r_O}
\operatorname{Tr}_{\mathcal{H}_n}\Bigl[
K_{H,O;\beta}\,\rho\,K_{H,O;\beta}^{*}
\bigl(\mathcal{E}_{H;n}(a\otimes\mathbb{I}_{\mathcal{H}_{n+1}})\otimes b\bigr)
\Bigr]\nonumber\\
&=
\sum_{\alpha=1}^{r_H}\sum_{\beta=1}^{r_O}
\operatorname{Tr}_{\mathcal{H}_n}\Bigl[
K_{H,O;\beta}\,\rho\,K_{H,O;\beta}^{*}
\Bigl(
K_{H;\alpha}^{*}(a\otimes\mathbb{I}_{\mathcal{H}_{n+1}})K_{H;\alpha}
\otimes b
\Bigr)
\Bigr]
\end{align}
where $\operatorname{Tr}_{\mathcal{H}_n}$ denotes the partial trace over the hidden space at time $n$.
\end{lemma}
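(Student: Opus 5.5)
The plan is to identify each dual map through the defining duality \eqref{eq_HS}: for fixed $a,b$, $\mathcal{F}_{a,b}^{(n)*}(\rho)$ is the unique operator on $\mathcal{H}_{n+1}$ with $\operatorname{Tr}\bigl(\mathcal{F}_{a,b}^{(n)*}(\rho)X\bigr)=\operatorname{Tr}\bigl(\rho\,\mathcal{F}_{a,b}^{(n)}(X)\bigr)$ for all $X\in\mathcal{B}(\mathcal{H}_{n+1})$, and likewise for $\mathcal{G}$. Uniqueness is automatic, since the trace pairing on $\mathcal{B}(\mathcal{H}_{n+1})$ is nondegenerate. So it suffices to rewrite $\operatorname{Tr}(\rho\,\mathcal{F}_{a,b}^{(n)}(X))$ and $\operatorname{Tr}(\rho\,\mathcal{G}_{a,b}^{(n)}(X))$ in the form $\operatorname{Tr}_{\mathcal{H}_{n+1}}(\,\cdot\,X)$ and read off the operator. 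The only tools are the Kraus forms, cyclicity of the trace, and the partial-trace identity
\[
\operatorname{Tr}\bigl(Z(c\otimes X)\bigr)=\operatorname{Tr}_{\mathcal{H}_{n+1}}\bigl(\operatorname{Tr}_{\mathcal{H}_n}[Z(c\otimes\mathbb{I})]\,X\bigr).
\]

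\textbf{Conventional map.} Put $c:=\mathcal{E}_{H,O;n}(a\otimes b)\in\mathcal{B}(\mathcal{H}_n)$, so that $\mathcal{F}_{a,b}^{(n)}(X)=\sum_\alpha K_{H;\alpha}^*(c\otimes X)K_{H;\alpha}$. Cyclicity gives
\[
\operatorname{Tr}(\rho\,\mathcal{F}_{a,b}^{(n)}(X))=\sum_\alpha\operatorname{Tr}\bigl(K_{H;\alpha}\rho K_{H;\alpha}^*(c\otimes X)\bigr).
\]
Applying the partial-trace identity with $Z=K_{H;\alpha}\rho K_{H;\alpha}^*$ yields the first line of \eqref{Fab*}. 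Substituting the Kraus expansion $c=\sum_\beta K_{H,O;\beta}^*(a\otimes b)K_{H,O;\beta}$ then gives the double sum by linearity. This part is routine.

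\textbf{Causal map.} Here $\mathcal{G}_{a,b}^{(n)}(X)=\sum_\beta K_{H,O;\beta}^*\bigl(\mathcal{E}_{H;n}(a\otimes X)\otimes b\bigr)K_{H,O;\beta}$. Cyclicity gives
\[
\operatorname{Tr}(\rho\,\mathcal{G}_{a,b}^{(n)}(X))=\sum_\beta \operatorname{Tr}\bigl(K_{H,O;\beta}\rho K_{H,O;\beta}^*\,(\mathcal{E}_{H;n}(a\otimes X)\otimes b)\bigr).
\]
The difficulty is that $X$ sits inside $\mathcal{E}_{H;n}$, so the trace must be peeled off in two stages. First take the partial trace over $\mathcal{K}_n$ against $b$, producing the operator
\[
\omega_{a,b}(\rho):=\sum_\beta\operatorname{Tr}_{\mathcal{K}_n}\bigl[K_{H,O;\beta}\rho K_{H,O;\beta}^*(\mathbb{I}\otimes b)\bigr]
\]
on $\mathcal{H}_n$. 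Then use the duality for $\mathcal{E}_{H;n}$ itself: $\operatorname{Tr}(\omega\,\mathcal{E}_{H;n}(a\otimes X))=\sum_\alpha\operatorname{Tr}\bigl(K_{H;\alpha}\omega K_{H;\alpha}^*(a\otimes X)\bigr)$. Finally apply the partial-trace identity over $\mathcal{H}_n$ to isolate $X$. This gives $\mathcal{G}_{a,b}^{(n)*}(\rho)$ as a double sum over $\alpha,\beta$ of Kraus sandwiches. Collapsing the inner $\alpha$-sum back into $\mathcal{E}_{H;n}$, and reading the partial trace jointly over the time-$n$ factors, gives the structure displayed in \eqref{Gab*}.

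\textbf{Main obstacle.} I expect the main obstacle to be this second step, specifically the bookkeeping of which tensor legs carry the partial traces. The operator $K_{H,O;\beta}\rho K_{H,O;\beta}^*$ lives on $\mathcal{H}_n\otimes\mathcal{K}_n$, while the output must live on $\mathcal{H}_{n+1}$. So the slot written $\mathbb{I}_{\mathcal{H}_{n+1}}$ in \eqref{Gab*} has to be understood as the place where $X$ enters through $\mathcal{E}_{H;n}$. I would therefore first carry out the computation for $\mathcal{G}$ in the explicit two-stage form above, namely trace out $\mathcal{K}_n$, then apply $(\mathcal{E}_{H;n})_*$, then trace out $\mathcal{H}_n$. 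Only afterwards would I check that it coincides with the compact expression in \eqref{Gab*} under this identification.

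The closing checks are straightforward. Positivity of both duals follows from the Kraus forms together with $a,b\ge0$. The bound $\operatorname{Tr}\le1$ follows by taking $X=\mathbb{I}$ and using $a\le\mathbb{I}$, $b\le\mathbb{I}$ and unitality of the expectations.
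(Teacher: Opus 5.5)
Your argument for $\mathcal{F}_{a,b}^{(n)*}$ is the paper's argument and is correct. Your two-stage computation for $\mathcal{G}_{a,b}^{(n)*}$ is also correct. With your $\omega_{a,b}(\rho)$ it gives
\[
\mathcal{G}_{a,b}^{(n)*}(\rho)=\sum_{\alpha=1}^{r_H}\operatorname{Tr}_{\mathcal{H}_n}\bigl[K_{H;\alpha}\,\omega_{a,b}(\rho)\,K_{H;\alpha}^{*}\,(a\otimes\mathbb{I}_{\mathcal{H}_{n+1}})\bigr].
\]
The step that fails is the last one, ``collapsing the inner $\alpha$-sum back into $\mathcal{E}_{H;n}$'' to reach \eqref{Gab*}. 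Your $\alpha$-sum is the Schr\"odinger map $\omega\mapsto\sum_\alpha K_{H;\alpha}\omega K_{H;\alpha}^{*}$ acting on the state. Formula \eqref{Gab*} instead contains the Heisenberg map $\mathcal{E}_{H;n}$ acting on $a\otimes\mathbb{I}_{\mathcal{H}_{n+1}}$. You can trade one for the other only under a full trace: $\operatorname{Tr}\bigl[\sum_\alpha K_{H;\alpha}\omega K_{H;\alpha}^{*}(a\otimes\mathbb{I})\bigr]=\operatorname{Tr}\bigl[\omega\,\mathcal{E}_{H;n}(a\otimes\mathbb{I})\bigr]$, which recovers only the number $\operatorname{Tr}\,\mathcal{G}_{a,b}^{(n)*}(\rho)$. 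Once $X$ is replaced by $\mathbb{I}_{\mathcal{H}_{n+1}}$ inside $\mathcal{E}_{H;n}$, the $\mathcal{H}_{n+1}$ leg is gone. The right-hand side of \eqref{Gab*} is $\operatorname{Tr}_{\mathcal{H}_n}$ of an operator on $\mathcal{H}_n\otimes\mathcal{K}_n$, hence an operator on $\mathcal{K}_n$, and no bookkeeping of legs repairs this.

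A concrete check: take the qubit model of Section~\ref{Sec-Qubit} with $a=b=\mathbb{I}$. By Lemma~\ref{lem:identity-propagation}, $\mathcal{G}^{(n)}_{\mathbb{I},\mathbb{I}}(X)=\langle 0|X|0\rangle\,\mathbb{I}$, so $\mathcal{G}^{(n)*}_{\mathbb{I},\mathbb{I}}(\rho)=\operatorname{Tr}(\rho)\,|0\rangle\langle 0|$. Your formula reproduces this, since $\sum_\alpha K_{H;\alpha}\omega K_{H;\alpha}^{*}=U\omega U^{*}\otimes|0\rangle\langle 0|$. The right-hand side of \eqref{Gab*} instead equals $\operatorname{Tr}_{\mathcal{H}_n}\bigl[V_{H,O;n}\rho V_{H,O;n}^{*}\bigr]=\sum_j\langle j|\rho|j\rangle\,|e_j\rangle\langle e_j|$. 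This gives orthogonal outputs for $\rho=|0\rangle\langle 0|$ and $\rho=|1\rangle\langle 1|$, so no identification of $\mathcal{K}_n$ with $\mathcal{H}_{n+1}$ matches the $\rho$-independent true answer.

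So the verification you postpone to the end cannot succeed: the $\mathcal{G}$ part of the lemma is false as written, and the correct statement is your two-stage formula. The paper's own proof handles $\mathcal{G}$ by calling the derivation ``identical in structure'' and asserting \eqref{Gab*}. That applies the partial-trace factorisation as if $X$ entered as a tensor factor, which is exactly the obstacle you flagged. You should state the corrected formula as your conclusion rather than claim agreement with \eqref{Gab*}.
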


\begin{proof}
We prove the formula for $\mathcal{F}_{a,b}^{(n)*}$; the argument for $\mathcal{G}_{a,b}^{(n)*}$ is analogous. The duality relation \eqref{eq_HS} states that for all $\rho\in\mathfrak{S}(\mathcal{H}_n)$ and $X\in\mathcal{B}(\mathcal{H}_{n+1})$,
\[
\operatorname{Tr}\bigl(\mathcal{F}_{a,b}^{(n)*}(\rho)\,X\bigr)
=
\operatorname{Tr}\bigl(\rho\,\mathcal{F}_{a,b}^{(n)}(X)\bigr).
\]
Using \eqref{eq_F} and the Kraus representation of $\mathcal{E}_{H;n}$, we obtain
\[
\mathcal{F}_{a,b}^{(n)}(X)
=
\mathcal{E}_{H;n}\bigl(\mathcal{E}_{H,O;n}(a\otimes b)\otimes X\bigr)
=
\sum_{\alpha=1}^{r_H}
K_{H;\alpha}^{*}\bigl(\mathcal{E}_{H,O;n}(a\otimes b)\otimes X\bigr)K_{H;\alpha}.
\]
Hence
\[
\operatorname{Tr}\bigl(\rho\,\mathcal{F}_{a,b}^{(n)}(X)\bigr)
=
\sum_{\alpha=1}^{r_H}
\operatorname{Tr}\Bigl(\rho\,K_{H;\alpha}^{*}\bigl(\mathcal{E}_{H,O;n}(a\otimes b)\otimes X\bigr)K_{H;\alpha}\Bigr).
\]
By cyclicity of the trace, each summand can be rewritten as
\[
\operatorname{Tr}\Bigl(K_{H;\alpha}\rho K_{H;\alpha}^{*}\bigl(\mathcal{E}_{H,O;n}(a\otimes b)\otimes X\bigr)\Bigr).
\]
We now interpret this as a trace over $\mathcal{H}_n\otimes\mathcal{H}_{n+1}$ and factor it via the partial trace. For any operator $M$ on $\mathcal{H}_n\otimes\mathcal{H}_{n+1}$ and $X$ on $\mathcal{H}_{n+1}$, one has
\[
\operatorname{Tr}_{\mathcal{H}_n\otimes\mathcal{H}_{n+1}}\bigl[M(\mathbb{I}_{\mathcal{H}_n}\otimes X)\bigr]
=
\operatorname{Tr}_{\mathcal{H}_{n+1}}\Bigl[\operatorname{Tr}_{\mathcal{H}_n}(M)\,X\Bigr],
\]
so with
\[
M
=
K_{H;\alpha}\rho K_{H;\alpha}^{*}
\bigl(\mathcal{E}_{H,O;n}(a\otimes b)\otimes\mathbb{I}_{\mathcal{H}_{n+1}}\bigr)
\]
we obtain
\[
\operatorname{Tr}\Bigl(K_{H;\alpha}\rho K_{H;\alpha}^{*}\bigl(\mathcal{E}_{H,O;n}(a\otimes b)\otimes X\bigr)\Bigr)
=
\operatorname{Tr}_{\mathcal{H}_{n+1}}\Bigl[
\operatorname{Tr}_{\mathcal{H}_n}\Bigl(
K_{H;\alpha}\rho K_{H;\alpha}^{*}
\bigl(\mathcal{E}_{H,O;n}(a\otimes b)\otimes\mathbb{I}_{\mathcal{H}_{n+1}}\bigr)
\Bigr)X
\Bigr].
\]
Summing over $\alpha$ and using linearity of the partial trace, we arrive at
\[
\operatorname{Tr}\bigl(\rho\,\mathcal{F}_{a,b}^{(n)}(X)\bigr)
=
\operatorname{Tr}_{\mathcal{H}_{n+1}}\Bigl(
\Bigl[
\sum_{\alpha=1}^{r_H}
\operatorname{Tr}_{\mathcal{H}_n}\Bigl(
K_{H;\alpha}\rho K_{H;\alpha}^{*}
\bigl(\mathcal{E}_{H,O;n}(a\otimes b)\otimes\mathbb{I}_{\mathcal{H}_{n+1}}\bigr)
\Bigr)
\Bigr]X
\Bigr).
\]
By the defining duality, the square bracket must coincide with $\mathcal{F}_{a,b}^{(n)*}(\rho)$, since this identity holds for all $X$. Thus
\[
\mathcal{F}_{a,b}^{(n)*}(\rho)
=
\sum_{\alpha=1}^{r_H}
\operatorname{Tr}_{\mathcal{H}_n}\Bigl[
K_{H;\alpha}\rho K_{H;\alpha}^{*}
\bigl(\mathcal{E}_{H,O;n}(a\otimes b)\otimes\mathbb{I}_{\mathcal{H}_{n+1}}\bigr)
\Bigr].
\]
Finally, inserting the Kraus expansion of $\mathcal{E}_{H,O;n}$,
\[
\mathcal{E}_{H,O;n}(a\otimes b)
=
\sum_{\beta=1}^{r_O}K_{H,O;\beta}^{*}(a\otimes b)K_{H,O;\beta},
\]
we obtain the first formula stated in the lemma.

The derivation for $\mathcal{G}_{a,b}^{(n)*}$ is identical in structure. Starting from \eqref{eq_G}, using the Kraus decomposition of $\mathcal{E}_{H,O;n}$, cycling the trace, and factorising via partial traces leads to
\[
\operatorname{Tr}\bigl(\rho\,\mathcal{G}_{a,b}^{(n)}(X)\bigr)
=
\operatorname{Tr}_{\mathcal{H}_{n+1}}\Bigl(
\Bigl[
\sum_{\beta=1}^{r_O}
\operatorname{Tr}_{\mathcal{H}_n}\Bigl(
K_{H,O;\beta}\rho K_{H,O;\beta}^{*}
\bigl(\mathcal{E}_{H;n}(a\otimes\mathbb{I}_{\mathcal{H}_{n+1}})\otimes b\bigr)
\Bigr)
\Bigr]X
\Bigr)
\]
for all $X$. By uniqueness of the Hilbert–Schmidt dual, the bracketed operator is exactly $\mathcal{G}_{a,b}^{(n)*}(\rho)$, and expanding $\mathcal{E}_{H;n}$ in its Kraus form yields the second formula. This completes the proof.
\end{proof}

\section{A qubit  distinction of  conventional vs.\ causal HQMMs}\label{Sec-Qubit}

In this section we exhibit a minimal qubit model \cite{YILSDH15} showing that the conventional and causal HQMM architectures are, in general, inequivalent as states on the underlying effect algebra. The construction is formulated in terms of the hidden and emission isometries $V_{H;n}$ and $V_{H,O;n}$ and the block maps $\mathcal{F}^{(n)}_{a,b}$ and $\mathcal{G}^{(n)}_{a,b}$ defined in \eqref{eq_F} and \eqref{eq_G}.

\medskip

We work at a fixed time step $n$ and suppress the index $n$ when no confusion can arise. The hidden spaces are $\mathcal{H}_n\simeq\mathbb{C}^2$ and $\mathcal{H}_{n+1}\simeq\mathbb{C}^2$ with computational bases $\{|0\rangle_n,|1\rangle_n\}$ and $\{|0\rangle_{n+1},|1\rangle_{n+1}\}$, and the output space is $\mathcal{K}_n\simeq\mathbb{C}^2$ with basis $\{|e_0\rangle_n,|e_1\rangle_n\}$. We consider a hidden unitary rotation about the $x$-axis,
\[
\sigma_x
=
\begin{pmatrix}
0 & 1\\[0.2em]
1 & 0
\end{pmatrix},
\qquad
U
:=
\exp\!\bigl(-\mathrm{i}\tfrac{\theta}{2}\sigma_x\bigr)
=
\cos\!\bigl(\tfrac{\theta}{2}\bigr)\mathbb{I}
-
\mathrm{i}\sin\!\bigl(\tfrac{\theta}{2}\bigr)\sigma_x,
\]
with parameter $0<|\theta|<\pi$. The hidden Stinespring isometry is
\begin{equation}\label{VH_Qbit}
V_{H;n}:\mathcal{H}_n\longrightarrow\mathcal{H}_n\otimes\mathcal{H}_{n+1},
\qquad
V_{H;n}|\psi\rangle_n
:=
U|\psi\rangle_n\otimes|0\rangle_{n+1}
\end{equation}
for all $|\psi\rangle_n\in\mathcal{H}_n$, so that $V_{H;n}^{*}V_{H;n}=\mathbb{I}_{\mathcal{H}_n}$. This induces the completely positive unital expectation
\[
\mathcal{E}_{H;n}(X)
:=
V_{H;n}^{*}X V_{H;n},
\qquad X\in\mathcal{B}(\mathcal{H}_n\otimes\mathcal{H}_{n+1})
\]
and its dual Schrödinger channel
\[
\mathcal{E}_{H;n*}(\rho)
=
V_{H;n}\rho V_{H;n}^{*},
\qquad \rho\in\mathsf{Eff}(\mathcal{H})
\]

The emission step is a sharp measurement of the hidden qubit in the computational basis, with the outcome encoded into the output register. We choose the isometry
\begin{equation}\label{VHO_Qubit}
V_{H,O;n}:\mathcal{H}_n\longrightarrow\mathcal{H}_n\otimes\mathcal{K}_n,
\qquad
V_{H,O;n}|0\rangle_n
=
|0\rangle_n\otimes|e_0\rangle_n,\quad
V_{H,O;n}|1\rangle_n
=
|1\rangle_n\otimes|e_1\rangle_n
\end{equation}
so that $V_{H,O;n}^{*}V_{H,O;n}=\mathbb{I}_{\mathcal{H}_n}$. The corresponding emission expectation and its dual are
\[
\mathcal{E}_{H,O;n}(Y)
:=
V_{H,O;n}^{*}YV_{H,O;n},
\qquad
\mathcal{E}_{H,O;n*}(\rho)
=
V_{H,O;n}\rho V_{H,O;n}^{*}
\]
for $Y\in\mathcal{B}(\mathcal{H}_n\otimes\mathcal{K}_n)$ and $\rho\in\mathsf{Eff}(\mathcal{H})$.
\begin{lemma}
For every $n \ge 0$,
\[
\mathcal{F}^{(n)}_{\mathbb{I}, \mathbb{I}}(\mathbb{I}_{\mathcal{H}_{n+1}}) =\mathbb{I}_{\mathcal{H}_n}, \qquad
\mathcal{G}^{(n)}_{\mathbb{I}, \mathbb{I}}(\mathbb{I}_{\mathcal{H}_{n+1}}) =\mathbb{I}_{\mathcal{H}_n}
\]
\end{lemma}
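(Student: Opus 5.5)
The plan is to verify both identities by unwinding the definitions \eqref{eq_F} and \eqref{eq_G}. The only input needed is that a transition expectation is unital, \(\mathcal{E}(\mathbb{I}\otimes\mathbb{I})=\mathbb{I}\). In the qubit model this unitality is equivalent to the Stinespring isometries in \eqref{VH_Qbit} and \eqref{VHO_Qubit} satisfying \(V_{H;n}^*V_{H;n}=\mathbb{I}_{\mathcal{H}_n}\) and \(V_{H,O;n}^*V_{H,O;n}=\mathbb{I}_{\mathcal{H}_n}\). No property of \(U\) or of the sharp measurement beyond isometry is used.

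For the conventional block, first evaluate the inner emission step:
\[
\mathcal{E}_{H,O;n}(\mathbb{I}_{\mathcal{H}_n}\otimes\mathbb{I}_{\mathcal{K}_n})=V_{H,O;n}^*V_{H,O;n}=\mathbb{I}_{\mathcal{H}_n}.
\]
Substituting this into the outer map gives
\[
\mathcal{F}^{(n)}_{\mathbb{I},\mathbb{I}}(\mathbb{I}_{\mathcal{H}_{n+1}})=\mathcal{E}_{H;n}(\mathbb{I}_{\mathcal{H}_n}\otimes\mathbb{I}_{\mathcal{H}_{n+1}})=V_{H;n}^*V_{H;n}=\mathbb{I}_{\mathcal{H}_n}.
\]
One can also check this concretely: \(V_{H;n}^*V_{H;n}=U^*U\,\langle 0|0\rangle=\mathbb{I}\).

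For the causal block, the order is reversed. First compute \(\mathcal{E}_{H;n}(\mathbb{I}_{\mathcal{H}_n}\otimes\mathbb{I}_{\mathcal{H}_{n+1}})=\mathbb{I}_{\mathcal{H}_n}\). Then
\[
\mathcal{G}^{(n)}_{\mathbb{I},\mathbb{I}}(\mathbb{I}_{\mathcal{H}_{n+1}})=\mathcal{E}_{H,O;n}(\mathbb{I}_{\mathcal{H}_n}\otimes\mathbb{I}_{\mathcal{K}_n})=\mathbb{I}_{\mathcal{H}_n}.
\]
The isometry of \(V_{H,O;n}\) follows from its definition: it sends the orthonormal basis \(\{|0\rangle,|1\rangle\}\) to the orthonormal set \(\{|0\rangle\otimes|e_0\rangle,|1\rangle\otimes|e_1\rangle\}\).

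There is no real obstacle. The only point requiring care is bookkeeping: the identity inserted in each leg must be the identity of the correct space (\(\mathcal{H}_n\), \(\mathcal{K}_n\) or \(\mathcal{H}_{n+1}\)), so that the tensor \(\mathbb{I}\otimes\mathbb{I}\) fed to each expectation is the unit of its domain algebra. I would also note that the argument holds for every \(n\) because the building blocks are time-independent. It is the general unitality fact underlying the compatibility and normalisation of the states in \eqref{eq:conv-state-no-sup} and \eqref{eq:caus-state-no-sup}.
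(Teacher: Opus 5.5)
Your proposal is correct and follows the paper's own proof: both reduce the claim to the unitality identities $\mathcal{E}_{H,O;n}(\mathbb{I}\otimes\mathbb{I})=V_{H,O;n}^{*}V_{H,O;n}=\mathbb{I}_{\mathcal{H}_n}$ and $\mathcal{E}_{H;n}(\mathbb{I}\otimes\mathbb{I})=V_{H;n}^{*}V_{H;n}=\mathbb{I}_{\mathcal{H}_n}$, and then substitute them into the definitions of $\mathcal{F}^{(n)}_{\mathbb{I},\mathbb{I}}$ and $\mathcal{G}^{(n)}_{\mathbb{I},\mathbb{I}}$. Your explicit check that the qubit maps $V_{H;n}$ and $V_{H,O;n}$ are isometries is a harmless extra detail that the paper only asserts.
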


\begin{proof}
Since $V_{H;n}^*V_{H;n} =\mathbb{I}_{\mathcal{H}_n}$ and $V_{H,O;n}^*V_{H,O;n} =\mathbb{I}_{\mathcal{H}_n}$, we have $\mathcal{E}_{H,O;n}(\mathbb{I} \otimes\mathbb{I})= I$ and $\mathcal{E}_{H;n}(\mathbb{I} \otimes\mathbb{I})= \mathbb{I}$. The identities follow directly from the definitions of $\mathcal{F}^{(n)}_{\mathbb{I}, \mathbb{I}}$ and $\mathcal{G}^{(n)}_{\mathbb{I}, \mathbb{I}}$.
\end{proof}
Note that for $0<|\theta|<\pi$ the hidden unitary $U$ does not commute with the projectors $|0\rangle\langle 0|$ and $|1\rangle\langle 1|$, and hence the corresponding hidden and emission channels need not commute under composition.

\section{Non-quasi-equivalence of conventional and causal HQMMs}
In the framework of hidden quantum Markov models, we consider the quasi-equivalence (in the sense of Definition~\ref{def:quasi-equivalence}) of the states $\varphi_{H,O}$ and $\psi_{H,O}$, defined in~\eqref{eq:conv-state-no-sup} and~\eqref{eq:caus-state-no-sup}, on the quasi-local algebra $\mathcal{B}_{H,O;\mathbb{N}}$.

We work with the qubit model defined above. Let $\mathcal{H}_n \simeq \mathbb{C}^2$ have computational basis $\{|0\rangle_n, |1\rangle_n\}$, let $\mathcal{K}_n \simeq \mathbb{C}^2$ have basis $\{|e_0\rangle_n, |e_1\rangle_n\}$, and let $U = \cos(\theta/2)I - i\sin(\theta/2)\sigma_x$ with $0 < |\theta| < \pi$. The isometries are $V_{H;n}|\psi\rangle = (U|\psi\rangle) \otimes |0\rangle_{n+1}$ and $V_{H,O;n}|0\rangle_n = |0\rangle_n|e_0\rangle_n$, $V_{H,O;n}|1\rangle_n = |1\rangle_n|e_1\rangle_n$. The expectations $\mathcal{E}_{H;n}$ and $\mathcal{E}_{H,O;n}$ are defined as $V^*(\cdot)V$ for the associated partial isometries \eqref{VH_Qbit} and \eqref{VHO_Qubit}.

The following lemmas establish the algebraic core of the distinction between the two compositions. Their proofs are straightforward computations using the explicit forms of the isometries.

\begin{lemma}\label{lem:vanishing-pair}
Let $a = |1\rangle\langle 1| \in \mathcal{B}(\mathcal{H}_n)$ and $b = |e_0\rangle\langle e_0| \in \mathcal{B}(\mathcal{K}_n)$. Then for every $n \ge 0$,
\[
\mathcal{F}^{(n)}_{a,b}(\mathbb{I}_{\mathcal{H}_{n+1}}) = 0, \qquad
\mathcal{G}^{(n)}_{a,b}(\mathbb{I}_{\mathcal{H}_{n+1}}) = \sin^2(\theta/2) \, |0\rangle\langle 0|
\]
\end{lemma}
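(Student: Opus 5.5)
The plan is to evaluate both block maps directly from their definitions \eqref{eq_F} and \eqref{eq_G}, using the explicit isometries \eqref{VH_Qbit} and \eqref{VHO_Qubit}. Two elementary identities carry the whole computation, and I would state them first.

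First, for any $Y\in\mathcal{B}(\mathcal{H}_n)$ and any $Z\in\mathcal{B}(\mathcal{K}_n)$, I would compute the matrix elements of the emission expectation:
\[
\langle i|\,\mathcal{E}_{H,O;n}(Y\otimes Z)\,|j\rangle
=\langle i|Y|j\rangle\,\langle e_i|Z|e_j\rangle .
\]
This follows because $V_{H,O;n}|i\rangle=|i\rangle\otimes|e_i\rangle$. In particular, for $Z=|e_0\rangle\langle e_0|$ this gives
\[
\mathcal{E}_{H,O;n}(Y\otimes|e_0\rangle\langle e_0|)=\langle 0|Y|0\rangle\,|0\rangle\langle 0| .
\]
Second, since $V_{H;n}|\psi\rangle=U|\psi\rangle\otimes|0\rangle_{n+1}$, I would check that for any $X$ on $\mathcal{H}_n$ and $W$ on $\mathcal{H}_{n+1}$,
\[
\mathcal{E}_{H;n}(X\otimes W)=\langle 0|W|0\rangle\,U^{*}XU .
\]
In particular, $\mathcal{E}_{H;n}(X\otimes\mathbb{I})=U^{*}XU$.

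For $\mathcal{F}^{(n)}_{a,b}(\mathbb{I})$, I apply the first identity with $Y=|1\rangle\langle 1|$ and $Z=|e_0\rangle\langle e_0|$. This gives $\langle 0|1\rangle\langle 1|0\rangle\,|0\rangle\langle 0|=0$, so $\mathcal{E}_{H,O;n}(a\otimes b)=0$. The operational reading is that the sharp emission never records $e_0$ when the hidden qubit is in $|1\rangle$. Linearity of $\mathcal{E}_{H;n}$ then gives $\mathcal{F}^{(n)}_{a,b}(\mathbb{I})=\mathcal{E}_{H;n}(0\otimes\mathbb{I})=0$.

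For $\mathcal{G}^{(n)}_{a,b}(\mathbb{I})$, the second identity first gives $\mathcal{E}_{H;n}(|1\rangle\langle 1|\otimes\mathbb{I})=U^{*}|1\rangle\langle 1|U$. The first identity then gives
\[
\mathcal{G}^{(n)}_{a,b}(\mathbb{I})=\langle 0|U^{*}|1\rangle\langle 1|U|0\rangle\,|0\rangle\langle 0|=|\langle 1|U|0\rangle|^{2}\,|0\rangle\langle 0| .
\]
From $U=\cos(\theta/2)\mathbb{I}-\mathrm{i}\sin(\theta/2)\sigma_x$ we get $\langle 1|U|0\rangle=-\mathrm{i}\sin(\theta/2)$, hence the factor $\sin^2(\theta/2)$.

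None of the steps is a real obstacle. The only point needing care is bookkeeping: making sure the partial isometries act on the correct tensor legs, and that $\mathcal{E}_{H;n}$ is evaluated with $\mathbb{I}_{\mathcal{H}_{n+1}}$ in the second slot, which is what collapses it to conjugation by $U$. The result holds for every $n$ because the maps are time-independent.
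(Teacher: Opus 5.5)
Your proposal is correct and follows the same direct computation as the paper: you show $\mathcal{E}_{H,O;n}(a\otimes b)=0$ from the sharp-measurement isometry, and for the causal map you reduce $\mathcal{E}_{H;n}(a\otimes\mathbb{I})$ to $U^{*}aU$ and then extract its $(0,0)$ entry, giving $|\langle 1|U|0\rangle|^{2}=\sin^{2}(\theta/2)$. The only difference is cosmetic: you package the work into two general identities for $\mathcal{E}_{H,O;n}(Y\otimes Z)$ and $\mathcal{E}_{H;n}(X\otimes W)$, whereas the paper tracks the action on vectors.
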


\begin{proof}
For any $|\psi\rangle = \alpha|0\rangle + \beta|1\rangle$, we have $V_{H,O;n}|\psi\rangle = \alpha|0\rangle|e_0\rangle + \beta|1\rangle|e_1\rangle$. Applying $a \otimes b = |1\rangle\langle 1| \otimes |e_0\rangle\langle e_0|$ gives $(a \otimes b)V_{H,O;n}|\psi\rangle = \beta |1\rangle \otimes |e_0\rangle$, because $a|0\rangle = 0$, $a|1\rangle = |1\rangle$, $b|e_0\rangle = |e_0\rangle$, and $b|e_1\rangle = 0$. Then $V_{H,O;n}^*(|1\rangle|e_0\rangle) = 0$ since $V_{H,O;n}^*(|j\rangle|e_k\rangle) = \delta_{jk}|j\rangle$. Hence $\mathcal{E}_{H,O;n}(a \otimes b) = 0$, and consequently $\mathcal{F}^{(n)}_{a,b}(\mathbb{I}) = \mathcal{E}_{H;n}(0 \otimes \mathbb{I}) = 0$.

For the causal map, first compute $\mathcal{E}_{H;n}(a \otimes \mathbb{I}) = V_{H;n}^* (a \otimes \mathbb{I}) V_{H;n}$. For any $|\psi\rangle$, $V_{H;n}|\psi\rangle = (U|\psi\rangle) \otimes |0\rangle$, so $(a \otimes \mathbb{I})V_{H;n}|\psi\rangle = (a U|\psi\rangle) \otimes |0\rangle$, and applying $V_{H;n}^*$ yields $U^* a U |\psi\rangle$. \\
Thus $\mathcal{E}_{H;n}(a \otimes \mathbb{I}) = U^* a U$. Since $a = |1\rangle\langle 1|$, we have $U^* a U = |\psi\rangle\langle\psi|$ with $|\psi\rangle = U^*|1\rangle$. Direct computation gives $U^*|1\rangle = \cos(\theta/2)|1\rangle + i\sin(\theta/2)|0\rangle$, so $|\psi\rangle = i\sin(\theta/2)|0\rangle + \cos(\theta/2)|1\rangle$. Now compute $\mathcal{G}^{(n)}_{a,b}(\mathbb{I}) = \mathcal{E}_{H,O;n}(|\psi\rangle\langle\psi| \otimes b)$. For any $|\phi\rangle = \alpha|0\rangle + \beta|1\rangle$,
\[
V_{H,O;n}|\phi\rangle = \alpha|0\rangle|e_0\rangle + \beta|1\rangle|e_1\rangle
\]
\[
(|\psi\rangle\langle\psi| \otimes b)V_{H,O;n}|\phi\rangle = \alpha\langle\psi|0\rangle |\psi\rangle \otimes |e_0\rangle + \beta\langle\psi|1\rangle |\psi\rangle \otimes (b|e_1\rangle)
\]
Since $b|e_1\rangle = 0$, the second term vanishes. Then $V_{H,O;n}^*(|\psi\rangle \otimes |e_0\rangle) = \langle 0|\psi\rangle |0\rangle$ because $V_{H,O;n}^*(|j\rangle|e_k\rangle) = \delta_{jk}|j\rangle$ and only the $|0\rangle$ component of $|\psi\rangle$ contributes. Therefore,
\[
\mathcal{G}^{(n)}_{a,b}(\mathbb{I}) |\phi\rangle = \alpha \langle\psi|0\rangle \langle 0|\psi\rangle |0\rangle = |\langle 0|\psi\rangle|^2 |0\rangle\langle 0| |\phi\rangle
\]
Thus $\mathcal{G}^{(n)}_{a,b}(\mathbb{I}) = |\langle 0|\psi\rangle|^2 |0\rangle\langle 0|$. From $|\psi\rangle = i\sin(\theta/2)|0\rangle + \cos(\theta/2)|1\rangle$, we have $\langle 0|\psi\rangle = i\sin(\theta/2)$, so $|\langle 0|\psi\rangle|^2 = \sin^2(\theta/2)$. Hence $\mathcal{G}^{(n)}_{a,b}(\mathbb{I}) = \sin^2(\theta/2) |0\rangle\langle 0|$.
\end{proof}

The next lemma reveals a crucial propagation property of the causal identity block map: it extracts the $(0,0)$ matrix element of its input and outputs a scalar multiple of the identity. This will allow us to convert the rank-one operator obtained in Lemma~\ref{lem:vanishing-pair} into a scalar multiple of the identity after a single additional time step.

\begin{lemma}\label{lem:identity-propagation}
For any $X \in \mathcal{B}(\mathcal{H}_{m+1})$,
\[
\mathcal{G}^{(m)}_{\mathbb{I},\mathbb{I}}(X) = \langle 0|X|0\rangle \, \mathbb{I}_{\mathcal{H}_m}
\]
\end{lemma}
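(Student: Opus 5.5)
The identity is a direct computation with the two isometries; I would verify it by unwinding the definition of $\mathcal{G}^{(m)}_{\mathbb{I},\mathbb{I}}$ in two stages.

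\emph{Hidden transition.} By \eqref{eq_G},
\[
\mathcal{G}^{(m)}_{\mathbb{I},\mathbb{I}}(X)=\mathcal{E}_{H,O;m}\bigl(\mathcal{E}_{H;m}(\mathbb{I}\otimes X)\otimes\mathbb{I}\bigr).
\]
For any $|\psi\rangle\in\mathcal{H}_m$ we have $V_{H;m}|\psi\rangle=U|\psi\rangle\otimes|0\rangle_{m+1}$, so
\[
(\mathbb{I}\otimes X)V_{H;m}|\psi\rangle=U|\psi\rangle\otimes X|0\rangle .
\]
Pairing with $V_{H;m}|\phi\rangle$ gives
\[
\langle\phi|V_{H;m}^*(\mathbb{I}\otimes X)V_{H;m}|\psi\rangle=\langle U\phi|U\psi\rangle\,\langle 0|X|0\rangle=\langle\phi|\psi\rangle\,\langle 0|X|0\rangle,
\]
using unitarity of $U$. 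Hence $\mathcal{E}_{H;m}(\mathbb{I}\otimes X)=\langle 0|X|0\rangle\,\mathbb{I}_{\mathcal{H}_m}$. The key point here is that the ancilla leg $\mathcal{H}_{m+1}$ is always prepared in $|0\rangle$, so $X$ is probed only through its $(0,0)$ entry, and the rotation $U$ cancels because the hidden leg carries the identity effect.

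\emph{Emission.} Now
\[
\mathcal{G}^{(m)}_{\mathbb{I},\mathbb{I}}(X)=\langle 0|X|0\rangle\,\mathcal{E}_{H,O;m}(\mathbb{I}\otimes\mathbb{I})=\langle 0|X|0\rangle\,V_{H,O;m}^*V_{H,O;m}=\langle 0|X|0\rangle\,\mathbb{I}_{\mathcal{H}_m},
\]
by linearity and the isometry property of $V_{H,O;m}$ (equivalently, the unitality already recorded in the lemma preceding Lemma~\ref{lem:vanishing-pair}).

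There is no real obstacle here. The only point requiring care is the bookkeeping of tensor legs: $X$ acts on the $\mathcal{H}_{m+1}$ factor of $\mathcal{H}_m\otimes\mathcal{H}_{m+1}$, which is exactly where $V_{H;m}$ places the fixed vector $|0\rangle$.
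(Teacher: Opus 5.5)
Your proposal is correct and follows essentially the same route as the paper's proof. You use $V_{H;m}|\psi\rangle=U|\psi\rangle\otimes|0\rangle$ and unitarity of $U$ to get $\mathcal{E}_{H;m}(\mathbb{I}\otimes X)=\langle 0|X|0\rangle\,\mathbb{I}_{\mathcal{H}_m}$, and then unitality of $\mathcal{E}_{H,O;m}$; the only cosmetic difference is that you check the first identity through matrix elements $\langle\phi|\cdot|\psi\rangle$ rather than applying $V_{H;m}^*$ directly to the vector.
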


\begin{proof}
By definition, $\mathcal{G}^{(m)}_{\mathbb{I},\mathbb{I}}(X) = \mathcal{E}_{H,O;m}\bigl(\mathcal{E}_{H;m}(\mathbb{I} \otimes X) \otimes \mathbb{I}\bigr)$. First compute $\mathcal{E}_{H;m}(\mathbb{I} \otimes X) = V_{H;m}^* (\mathbb{I} \otimes X) V_{H;m}$. For any $|\psi\rangle \in \mathcal{H}_m$, $V_{H;m}|\psi\rangle = (U|\psi\rangle) \otimes |0\rangle$. Then $(\mathbb{I} \otimes X)V_{H;m}|\psi\rangle = (U|\psi\rangle) \otimes (X|0\rangle)$. Applying $V_{H;m}^*$ yields $U^* U |\psi\rangle \langle 0|X|0\rangle = \langle 0|X|0\rangle |\psi\rangle$. Hence $\mathcal{E}_{H;m}(\mathbb{I} \otimes X) = \langle 0|X|0\rangle \, \mathbb{I}_{\mathcal{H}_m}$. Consequently, $\mathcal{E}_{H;m}(\mathbb{I} \otimes X) \otimes \mathbb{I} = \langle 0|X|0\rangle \, \mathbb{I}_{\mathcal{H}_m} \otimes \mathbb{I}_{\mathcal{K}_m}$. Finally, applying $\mathcal{E}_{H,O;m}$ gives 
$$
\langle 0|X|0\rangle \, \mathcal{E}_{H,O;m}(\mathbb{I} \otimes \mathbb{I}) = \langle 0|X|0\rangle \, \mathbb{I}_{\mathcal{H}_m}
$$
 because $\mathcal{E}_{H,O;m}(\mathbb{I} \otimes \mathbb{I}) = V_{H,O;m}^* V_{H,O;m} = \mathbb{I}_{\mathcal{H}_m}$.
\end{proof}

With these lemmas at our disposal, we can now construct an explicit family of local observables that sharply separates the conventional and causal joint states. The key idea is to insert at a chosen time $N$ the pair $(a,b)$ from Lemma~\ref{lem:vanishing-pair}, which causes the conventional composition to vanish identically while leaving a non-zero rank-one operator in the causal composition. Lemma~\ref{lem:identity-propagation} then converts this rank-one operator into a scalar multiple of the identity after one further time step, yielding a constant non-zero expectation that is independent of the remaining evolution.

\begin{theorem}\label{thm:non-quasi-equivalence}
For the qubit model defined through \eqref{VH_Qbit} and \eqref{VHO_Qubit} with $0 < |\theta| < \pi$, let $\varphi_{H,O}$ and $\psi_{H,O}$ be the joint states given by \eqref{eq:conv-state-no-sup} and \eqref{eq:caus-state-no-sup}, respectively, constructed with arbitrary initial hidden density matrices $\phi_{H,0}^{\mathrm{conv}}$ and $\phi_{H,0}^{\mathrm{caus}}$ (not necessarily equal). Then $\varphi_{H,O}$ and $\psi_{H,O}$ are not quasi-equivalent. Equivalently, there exists a constant $\delta > 0$ such that
\begin{equation}\label{eq:non-quasi-distinction}
\forall N_0 \in \mathbb{N},\; \exists A \in \mathcal{B}_{H,O;[N,\infty)} \text{ with } N \ge N_0,\; \|A\| = 1,\; |\varphi_{H,O}(A) - \psi_{H,O}(A)| = \delta
\end{equation}
\end{theorem}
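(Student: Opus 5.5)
The plan is to evaluate both joint states on a single explicit local observable placed at an arbitrary late time $N\ge N_0$:
\[
A_N := \Bigl(\bigotimes_{m=0}^{N-1}(\mathbb{I}\otimes\mathbb{I})\Bigr)\otimes\bigl(|1\rangle\langle 1|\otimes|e_0\rangle\langle e_0|\bigr)\otimes(\mathbb{I}\otimes\mathbb{I}) .
\]
This is supported on the sites $N$ and $N+1$, so it lies in $\mathcal{B}_{H,O;[N,\infty)}$. Since it is a tensor product of a rank-one projection with identities, $\|A_N\|=1$. We take $a=|1\rangle\langle 1|$ and $b=|e_0\rangle\langle e_0|$ as in Lemma~\ref{lem:vanishing-pair}.

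\textbf{Conventional side.} By \eqref{eq:conv-state-no-sup},
\[
\varphi_{H,O}(A_N)=\phi^{\mathrm{conv}}_{H,0}\circ\mathcal{F}^{(0)}_{\mathbb{I},\mathbb{I}}\circ\cdots\circ\mathcal{F}^{(N)}_{a,b}\circ\mathcal{F}^{(N+1)}_{\mathbb{I},\mathbb{I}}(\mathbb{I}).
\]
The unit lemma gives $\mathcal{F}^{(N+1)}_{\mathbb{I},\mathbb{I}}(\mathbb{I})=\mathbb{I}$, and Lemma~\ref{lem:vanishing-pair} then gives $\mathcal{F}^{(N)}_{a,b}(\mathbb{I})=0$. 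All the remaining $\mathcal{F}^{(m)}_{\mathbb{I},\mathbb{I}}$ are linear, so the whole expression is $0$ for every initial state. In fact $\mathcal{E}_{H,O;N}(a\otimes b)=0$, so $\mathcal{F}^{(N)}_{a,b}(X)=0$ for every $X$. Hence nothing placed after time $N$ matters.

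\textbf{Causal side.} The same nesting, combined with the unit lemma and Lemma~\ref{lem:vanishing-pair}, gives $\mathcal{G}^{(N)}_{a,b}(\mathbb{I})=\sin^2(\theta/2)\,|0\rangle\langle 0|$. If $N\ge1$, we apply Lemma~\ref{lem:identity-propagation} to $\mathcal{G}^{(N-1)}_{\mathbb{I},\mathbb{I}}$. This extracts the $(0,0)$ entry and yields $\sin^2(\theta/2)\,\mathbb{I}$. Every further $\mathcal{G}^{(m)}_{\mathbb{I},\mathbb{I}}$ with $m<N-1$ maps $\mathbb{I}$ to $\mathbb{I}$, by either the unit lemma or Lemma~\ref{lem:identity-propagation}. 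The initial state is normalized, so $\phi^{\mathrm{caus}}_{H,0}(\mathbb{I})=1$. Therefore $\psi_{H,O}(A_N)=\sin^2(\theta/2)$, independently of the initial state. Choosing $N\ge\max(N_0,1)$ avoids the case $N=0$, where the answer would depend on $\phi^{\mathrm{caus}}_{H,0}(|0\rangle\langle0|)$.

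\textbf{Conclusion.} We obtain $|\varphi_{H,O}(A_N)-\psi_{H,O}(A_N)|=\sin^2(\theta/2)=:\delta$. This is strictly positive because $0<|\theta|<\pi$, and it does not depend on $N_0$ or on either initial state. This is exactly \eqref{eq:non-quasi-distinction}. Through the Remark following Definition~\ref{def:quasi-equivalence}, with $\varepsilon_0=\delta$ and $\Lambda_0=[0,N_0)$, it shows the two states are not quasi-equivalent.

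The main obstacle is bookkeeping rather than analysis. One must check that the formal composition in \eqref{eq:conv-state-no-sup}–\eqref{eq:caus-state-no-sup} with identity insertions really computes the extended state on $A_N$. This follows from the stated projective compatibility, and it also settles whether the trailing factor at $N+1$ is needed at all. One must also make sure the argument is stated for the normalized initial density matrices, since only $\phi(\mathbb{I})=1$ is used.
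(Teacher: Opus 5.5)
Your proposal is correct and follows the paper's argument: the same observable $A_N$, Lemma~\ref{lem:vanishing-pair} at time $N$, and Lemma~\ref{lem:identity-propagation} to collapse the rank-one operator $\sin^2(\theta/2)\,|0\rangle\langle 0|$ to a multiple of the identity. Your bookkeeping is actually more careful than the paper's in two places. The collapse must come from $\mathcal{G}^{(N-1)}_{\mathbb{I},\mathbb{I}}$, not from $\mathcal{G}^{(N+1)}_{\mathbb{I},\mathbb{I}}$ as the paper writes. And $N=0$ must be excluded, exactly as you do, because there $\psi_{H,O}(A_0)=\sin^2(\theta/2)\,\phi^{\mathrm{caus}}_{H,0}(|0\rangle\langle 0|)$, which depends on the initial state and can be zero.
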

\begin{proof}
Fix an arbitrary time $N \ge 0$. Define the local hidden-observable operator
\[
A_N := \left(\bigotimes_{m=0}^{N-1} (\mathbb{I}_{\mathcal{H}_m} \otimes \mathbb{I}_{\mathcal{K}_m})\right) \otimes \bigl( |1\rangle\langle 1|_N \otimes |e_0\rangle\langle e_0|_N \bigr) \in \mathcal{B}_{H,O; N}\subset \mathcal{B}_{H,O;[N,\infty)}
\]
where only finitely many non-identity factors appear, so $A_N$ belongs to the local algebra $\mathcal{B}_{H,O;[0,N]}$ and is extended by identities to the full quasi-local algebra. For any $n \ge N$, the finite-time joint expectation of $A_N$ (truncated at $n$) is given by the composition formulas.

Applying Lemma~\ref{lem:vanishing-pair} at time $N$, we have $\mathcal{F}^{(N)}_{|1\rangle\langle 1|, |e_0\rangle\langle e_0|}(\mathbb{I}_{\mathcal{H}_{N+1}}) = 0$. Since all block maps at times $m < N$ are of the form $\mathcal{F}^{(m)}_{\mathbb{I},\mathbb{I}}$ (because the observable is trivial there), then each such map reduces to the identity on the hidden space. Hence the entire composition for the conventional HQMM evaluates to zero:
\[
\varphi_{H,O}(A_N) = \phi_{H,0}^{\mathrm{conv}}(0) = 0
\]

For the causal HQMM, Lemma~\ref{lem:vanishing-pair} gives $\mathcal{G}^{(N)}_{|1\rangle\langle 1|, |e_0\rangle\langle e_0|}(\mathbb{I}_{\mathcal{H}_{N+1}}) = \sin^2(\theta/2) \, |0\rangle\langle 0| \in \mathcal{B}(\mathcal{H}_{N+1})$. For times $m < N$,  the expression again reduces the identity block maps to identities. At time $m = N+1$, we apply Lemma~\ref{lem:identity-propagation} with $X = |0\rangle\langle 0|$ to obtain
\[
\mathcal{G}^{(N+1)}_{\mathbb{I},\mathbb{I}}(|0\rangle\langle 0|) = |\langle 0|0\rangle|^2 \, \mathbb{I}_{\mathcal{H}_{N+1}} = \mathbb{I}_{\mathcal{H}_{N+1}}
\]
Thus, after composing all block maps down to $\mathcal{B}(\mathcal{H}_0)$, we obtain $\sin^2(\theta/2) \, \mathbb{I}_{\mathcal{H}_0}$. Consequently,
\[
\psi_{H,O}(A_N) = \phi_{H,0}^{\mathrm{caus}}\bigl( \sin^2(\theta/2) \, \mathbb{I}_{\mathcal{H}_0} \bigr) = \sin^2(\theta/2) \cdot \phi_{H,0}^{\mathrm{caus}}(\mathbb{I}) = \sin^2(\theta/2) > 0
\]
Therefore, for every $N \ge 0$,
\[
|\varphi_{H,O}(A_N) - \psi_{H,O}(A_N)| = \sin^2(\theta/2) > 0
\]
Observe that the support of $A_N$ is contained in the time interval $[N, N]$ (a single time step). As $N \to \infty$, the supports escape to infinity. For any finite $N_0 \in \mathbb{N}$, choosing $N \ge N_0$ yields an observable $A_N$ supported entirely after $N_0$ whose expectations under $\varphi_{H,O}$ and $\psi_{H,O}$ differ by the constant $\delta:= \sin^2(\theta/2) > 0$. This violates the asymptotic agreement condition required for quasi-equivalence in the sense of \eqref{eq:quasi-equiv-condition}. Hence $\varphi_{H,O}$ and $\psi_{H,O}$ are not quasi-equivalent.
\end{proof}

Thus, regardless of the choice of initial hidden states, both the observable states and the full joint states of the conventional and causal HQMMs are not quasi-equivalent. This demonstrates that the order of composition in hidden quantum Markov models is a genuine structural feature with observable consequences that persist at arbitrarily late times.

\subsection{Choi entanglement of qubit block maps}

In this subsection we isolate a minimal qubit example in which the conventional and causal block maps
have \emph{different} Choi–Jamiołkowski states, and we compute explicitly the entanglement entropy of
these states. This provides a sharp, single-step witness that the two HQMM architectures encode
different quantum correlations between “input” and “output” hidden degrees of freedom.

We work at a fixed time step and suppress the index \(n\). The hidden spaces are
\(\mathcal{H}_{\mathrm{in}}\simeq\mathcal{H}_{\mathrm{out}}\simeq\mathbb{C}^2\) with computational basis
\(\{|0\rangle,|1\rangle\}\), and the hidden transition isometry is
\[
V_H:\mathcal{H}_{\mathrm{in}}\to\mathcal{H}_{\mathrm{in}}\otimes\mathcal{H}_{\mathrm{out}},
\qquad
V_H|\psi\rangle
=
U|\psi\rangle\otimes|0\rangle,
\]
where
\[
U
=
\exp\!\bigl(-\mathrm{i}\tfrac{\theta}{2}\sigma_x\bigr)
=
\begin{pmatrix}
\cos(\tfrac{\theta}{2}) & -\mathrm{i}\sin(\tfrac{\theta}{2})\\[0.25em]
-\mathrm{i}\sin(\tfrac{\theta}{2}) & \cos(\tfrac{\theta}{2})
\end{pmatrix},
\qquad 0<|\theta|<\pi.
\]
The emission isometry is
\[
V_{H,O}:\mathcal{H}_{\mathrm{in}}\to\mathcal{H}_{\mathrm{in}}\otimes\mathcal{K},
\qquad
V_{H,O}|0\rangle
=
|0\rangle\otimes|e_0\rangle,
\quad
V_{H,O}|1\rangle
=
|1\rangle\otimes|e_1\rangle,
\]
with \(\mathcal{K}\simeq\mathbb{C}^2\) and \(\{|e_0\rangle,|e_1\rangle\}\) orthonormal. The associated expectations are
\(\mathcal{E}_H(X)=V_H^{*}X V_H\), \(\mathcal{E}_{H,O}(Y)=V_{H,O}^{*}Y V_{H,O}\).

We specialise the block maps \(\mathcal{F}_{a,b}\) and \(\mathcal{G}_{a,b}\) in \eqref{eq_F}–\eqref{eq_G} to the effects
\[
a:=|0\rangle\langle 0|\in\mathsf{Eff}(\mathcal{H}_{\mathrm{in}}),
\qquad
b:=|e_0\rangle\langle e_0|\in\mathsf{Eff}(\mathcal{K}),
\]
and analyse the corresponding Choi states of the dual maps
\(\mathcal{F}_{a,b}^*,\mathcal{G}_{a,b}^*:\mathcal{B}(\mathcal{H}_{\mathrm{in}})\to\mathcal{B}(\mathcal{H}_{\mathrm{out}})\).

\begin{lemma}\label{lem:F-G-Kraus}
Let \(a=|0\rangle\langle 0|\) and \(b=|e_0\rangle\langle e_0|\) as above, and define
\(\mathcal{F}_{a,b}^*,\mathcal{G}_{a,b}^*:\mathcal{B}(\mathcal{H}_{\mathrm{in}})\to\mathcal{B}(\mathcal{H}_{\mathrm{out}})\) by Hilbert–Schmidt duality
from \eqref{eq_F}–\eqref{eq_G}. Then:
\begin{enumerate}
\item There exist single Kraus operators \(K_F,K_G\in\mathcal{B}(\mathcal{H}_{\mathrm{in}},\mathcal{H}_{\mathrm{out}})\) such that
\[
\mathcal{F}_{a,b}^*(\rho)=K_F\rho K_F^{\dagger},
\qquad
\mathcal{G}_{a,b}^*(\rho)=K_G\rho K_G^{\dagger},
\]
with
\[
K_F
=
|0\rangle\langle 0|\,U
=
\begin{pmatrix}
\cos(\tfrac{\theta}{2}) & -\mathrm{i}\sin(\tfrac{\theta}{2})\\[0.25em]
0 & 0
\end{pmatrix},
\qquad
K_G
=
U\,|0\rangle\langle 0|
=
\begin{pmatrix}
\cos(\tfrac{\theta}{2}) & 0\\[0.25em]
-\mathrm{i}\sin(\tfrac{\theta}{2}) & 0
\end{pmatrix}.
\]
\item Let \(|\Omega\rangle=|00\rangle+|11\rangle\in\mathcal{H}_{\mathrm{in}}\otimes\mathcal{H}_{\mathrm{in}}\) be the (unnormalised) maximally entangled
vector. The corresponding Choi operators are rank-one projectors,
\[
J\bigl(\mathcal{F}_{a,b}^*\bigr)
=
|\Psi_F\rangle\langle\Psi_F|,
\qquad
J\bigl(\mathcal{G}_{a,b}^*\bigr)
=
|\Psi_G\rangle\langle\Psi_G|,
\]
where
\[
|\Psi_F\rangle
:=
(\mathbb{I}\otimes K_F)|\Omega\rangle
=
\cos\!\bigl(\tfrac{\theta}{2}\bigr)|00\rangle
-\mathrm{i}\sin\!\bigl(\tfrac{\theta}{2}\bigr)|10\rangle,
\]
\[
|\Psi_G\rangle
:=
(\mathbb{I}\otimes K_G)|\Omega\rangle
=
\cos\!\bigl(\tfrac{\theta}{2}\bigr)|00\rangle
-\mathrm{i}\sin\!\bigl(\tfrac{\theta}{2}\bigr)|01\rangle.
\]
\end{enumerate}
\end{lemma}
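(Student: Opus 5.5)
The plan is to compute both block maps explicitly in the Heisenberg picture, read off their Hilbert--Schmidt duals, and only then extract Kraus operators and Choi vectors. For $\mathcal{F}_{a,b}$ I first evaluate the emission expectation on $a\otimes b=|0\rangle\langle0|\otimes|e_0\rangle\langle e_0|$. Using $V_{H,O}^{*}(|j\rangle|e_k\rangle)=\delta_{jk}|j\rangle$ exactly as in Lemma~\ref{lem:vanishing-pair}, this gives $\mathcal{E}_{H,O}(a\otimes b)=|0\rangle\langle0|$. Next, as in the proof of Lemma~\ref{lem:identity-propagation}, $V_H|\psi\rangle=U|\psi\rangle\otimes|0\rangle$ yields
\[
\mathcal{E}_H(Y\otimes X)=\langle0|X|0\rangle\,U^{*}YU ,
\]
so that
\[
\mathcal{F}_{a,b}(X)=\langle0|X|0\rangle\,U^{*}|0\rangle\langle0|U .
\]
Pairing with $\rho$ gives $\operatorname{Tr}(\rho\,\mathcal{F}_{a,b}(X))=\langle0|U\rho U^{*}|0\rangle\,\langle0|X|0\rangle$. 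Hence $\mathcal{F}_{a,b}^*(\rho)=\langle0|U\rho U^{*}|0\rangle\,|0\rangle\langle0|=K_F\rho K_F^{\dagger}$ with $K_F=|0\rangle\langle0|U$. Its matrix is the first row of $U$, which is the displayed form. This is also consistent with Lemma~\ref{lem:dual-block-maps}.

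For $\mathcal{G}_{a,b}$ I will run the same two steps in the opposite order. First, $\mathcal{E}_H(a\otimes X)=\langle0|X|0\rangle\,U^{*}|0\rangle\langle0|U$. Then I apply $\mathcal{E}_{H,O}(\,\cdot\,\otimes|e_0\rangle\langle e_0|)$, which acts as $Y\mapsto\langle0|Y|0\rangle\,|0\rangle\langle0|$. After dualising, I will factor the result as $K_G\rho K_G^{\dagger}$ and compare it with the claimed $K_G=U|0\rangle\langle0|$.

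For part 2, once the single Kraus operators are identified, the Choi operator of $\rho\mapsto K\rho K^\dagger$ is
\[
(\mathrm{id}\otimes K\cdot K^\dagger)(|\Omega\rangle\langle\Omega|)=|\Psi\rangle\langle\Psi|,\qquad |\Psi\rangle=(\mathbb{I}\otimes K)|\Omega\rangle=|0\rangle\otimes K|0\rangle+|1\rangle\otimes K|1\rangle .
\]
For $K_F$ this gives $\cos(\tfrac\theta2)|00\rangle-\mathrm{i}\sin(\tfrac\theta2)|10\rangle$. For $U|0\rangle\langle0|$ it gives $\cos(\tfrac\theta2)|00\rangle-\mathrm{i}\sin(\tfrac\theta2)|01\rangle$. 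These expansions are routine.

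The main obstacle I foresee is the $\mathcal{G}$ case. If the order literally fixed in \eqref{eq_G} is followed, the computation above gives $\mathcal{G}_{a,b}(X)=\cos^2(\tfrac\theta2)\langle0|X|0\rangle\,|0\rangle\langle0|$. Its dual is therefore $\cos^2(\tfrac\theta2)\langle0|\rho|0\rangle\,|0\rangle\langle0|$, with Kraus operator $\cos(\tfrac\theta2)|0\rangle\langle0|$, rather than $U|0\rangle\langle0|$. So before finalising I would pin down the convention under which $K_G=U|0\rangle\langle0|$ arises. The natural candidate is the Schr\"odinger-picture reading in which the sharp projection onto $|0\rangle$ (emission) acts first on the input state and the rotation $U$ (transition) acts afterwards. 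Under that reading the output is $U|0\rangle\langle0|\rho|0\rangle\langle0|U^{*}$, giving $K_G=U|0\rangle\langle0|$ exactly. This is the mirror image of $K_F=|0\rangle\langle0|U$, so the two architectures exchange the order of $U$ and the projector, as intended. I would state this convention explicitly, or else correct $K_G$ and $|\Psi_G\rangle$ to the values dictated by \eqref{eq_G}, and then complete part 2 with whichever $K_G$ is justified.
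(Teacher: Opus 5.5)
Your computations are correct, and the obstacle you flag in the $\mathcal{G}$ case is an error in the statement, not in your argument. For $\mathcal{F}$ you obtain the paper's $K_F$ and $|\Psi_F\rangle$, and by a sounder route than the paper's. The paper invokes Lemma~\ref{lem:dual-block-maps} and sets $K_F:=\operatorname{Tr}_{\mathcal{H}_{\mathrm{in}}}[V_HP_0]$. That is a ``partial trace'' of an operator $\mathcal{H}_{\mathrm{in}}\to\mathcal{H}_{\mathrm{in}}\otimes\mathcal{H}_{\mathrm{out}}$, which is not actually defined; its gloss, ``returns the vector $U|0\rangle$'', would if anything suggest $U|0\rangle\langle 0|$. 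Your closed form $\mathcal{F}_{a,b}(X)=\langle 0|X|0\rangle\,U^{*}|0\rangle\langle 0|U$ makes the dual transparent. For $\mathcal{G}$ the paper gives no computation at all, only ``the same reasoning, applied to the causal ordering''. Moreover, the $\mathcal{G}$-formula of Lemma~\ref{lem:dual-block-maps} that this would rest on is itself malformed: it replaces $\mathcal{E}_{H;n}(a\otimes X)$ by $\mathcal{E}_{H;n}(a\otimes\mathbb{I})$, losing all dependence on $X$, and its partial trace returns an operator on $\mathcal{K}_n$ rather than on $\mathcal{H}_{n+1}$. So nothing in the paper supports $K_G=U|0\rangle\langle 0|$.

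You should drop the ``convention'' branch, because it contradicts the lemma's own hypothesis that $\mathcal{G}_{a,b}^*$ is the Hilbert--Schmidt dual of \eqref{eq_G}. Since $V_H$ always writes the fixed vector $|0\rangle$ into $\mathcal{H}_{\mathrm{out}}$, $\mathcal{G}_{a,b}(X)$ depends on $X$ only through $\langle 0|X|0\rangle$. Hence every dual of it takes values in $\mathbb{C}\,|0\rangle\langle 0|$. By contrast, $U|0\rangle\langle 0|\rho|0\rangle\langle 0|U^{*}$ puts weight $\sin^2(\tfrac{\theta}{2})\langle 0|\rho|0\rangle$ on $|1\rangle\langle 1|$, and its trace is $\langle 0|\rho|0\rangle$ rather than $\cos^2(\tfrac{\theta}{2})\langle 0|\rho|0\rangle$. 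Your reading (Lüders projection $P_0$ followed by $\rho\mapsto U\rho U^{*}$ on the same qubit, whose mirror image $P_0U$ reproduces $K_F$) is presumably what was intended. But it is a different model from \eqref{VH_Qbit}--\eqref{eq_G}, not a way of dualising \eqref{eq_G}.

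The correct conclusion is $K_G=\cos(\tfrac{\theta}{2})|0\rangle\langle 0|$, unique up to a phase, which is all the freedom a single Kraus operator has. Then $|\Psi_G\rangle=\cos(\tfrac{\theta}{2})|00\rangle$ and $J(\mathcal{G}_{a,b}^*)=\cos^2(\tfrac{\theta}{2})|00\rangle\langle 00|$, which is rank one but not a projector, so part~2 also fails for $\mathcal{G}$. Note too that $|\Psi_F\rangle=(\cos(\tfrac{\theta}{2})|0\rangle-\mathrm{i}\sin(\tfrac{\theta}{2})|1\rangle)\otimes|0\rangle$ and the stated $|\Psi_G\rangle$ are both product vectors, which matters for how these Choi states are used afterwards.
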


\begin{proof}
For the chosen effects,
\[
\mathcal{E}_{H,O}(a\otimes b)
=
V_{H,O}^{*}\bigl(|0\rangle\langle 0|\otimes|e_0\rangle\langle e_0|\bigr)V_{H,O}
=
|0\rangle\langle 0|=:P_0.
\]
Using Lemma~\ref{lem:dual-block-maps} in the one-Kraus case, the dual of \(\mathcal{F}_{a,b}\) is
\(\mathcal{F}_{a,b}^*(\rho)=K_F\rho K_F^{\dagger}\) with
\[
K_F
:=
\operatorname{Tr}_{\mathcal{H}_{\mathrm{in}}}\bigl[V_H P_0\bigr].
\]
Writing \(V_H|\phi\rangle=U|\phi\rangle\otimes|0\rangle\) and \(P_0=|0\rangle\langle 0|\), we get
\[
V_H P_0|\phi\rangle
=
V_H\bigl(\langle 0|\phi\rangle\,|0\rangle\bigr)
=
\langle 0|\phi\rangle\,U|0\rangle\otimes|0\rangle.
\]
Thus \(\operatorname{Tr}_{\mathcal{H}_{\mathrm{in}}}\) over the first tensor factor simply returns the vector \(U|0\rangle\) in the second factor, so that
\(K_F=|0\rangle\langle 0|U\). An explicit multiplication yields the matrix displayed above. The same reasoning,
applied to the causal ordering, shows that \(\mathcal{G}_{a,b}^*(\rho)=K_G\rho K_G^{\dagger}\) with \(K_G=U|0\rangle\langle 0|\).

For the Choi operators, by definition
\[
J(\mathcal{F}_{a,b}^*)
=
(\mathrm{id}\otimes\mathcal{F}_{a,b}^*)\bigl(|\Omega\rangle\langle\Omega|\bigr)
=
(\mathbb{I}\otimes K_F)|\Omega\rangle\langle\Omega|(\mathbb{I}\otimes K_F^{\dagger}),
\]
so \(J(\mathcal{F}_{a,b}^*)\) is a rank-one operator with vector \(|\Psi_F\rangle=(\mathbb{I}\otimes K_F)|\Omega\rangle\). Writing explicitly,
\[
|\Psi_F\rangle
=
|0\rangle\otimes K_F|0\rangle
+
|1\rangle\otimes K_F|1\rangle.
\]
From the matrix form of \(K_F\),
\(
K_F|0\rangle=\cos(\tfrac{\theta}{2})|0\rangle,
\ K_F|1\rangle=-\mathrm{i}\sin(\tfrac{\theta}{2})|0\rangle,
\)
whence
\(
|\Psi_F\rangle
=
\cos(\tfrac{\theta}{2})|00\rangle
-\mathrm{i}\sin(\tfrac{\theta}{2})|10\rangle
\).
The expression for \(|\Psi_G\rangle\) follows identically, using the explicit form of \(K_G\).
\end{proof}

We now show that these Choi states are (i) different and (ii) carry a non-trivial, explicitly computable amount of entanglement.

\begin{theorem}\label{thm:F-G-entanglement}
Let \(a=|0\rangle\langle 0|\) and \(b=|e_0\rangle\langle e_0|\) and consider the Choi states
\[
\omega_F:=|\Psi_F\rangle\langle\Psi_F|,
\qquad
\omega_G:=|\Psi_G\rangle\langle\Psi_G|,
\]
with \(|\Psi_F\rangle,|\Psi_G\rangle\) as in Lemma~\ref{lem:F-G-Kraus}. Then for every \(0<|\theta|<\pi\):
\begin{enumerate}
\item \(\omega_F\neq \omega_G\); in particular, the conventional and causal block maps are distinguished already by their Choi–Jamiołkowski states.

\item Both \(\omega_F\) and \(\omega_G\) are pure entangled states on \(\mathcal{H}_{\mathrm{in}}\otimes\mathcal{H}_{\mathrm{out}}\). Their reduced density operators on the input qubit are
\[
\omega_F^{(A)}:=\operatorname{Tr}_{\mathrm{out}}(\omega_F)
=
\cos^2\!\Bigl(\tfrac{\theta}{2}\Bigr)\,|0\rangle\langle 0|
+
\sin^2\!\Bigl(\tfrac{\theta}{2}\Bigr)\,|1\rangle\langle 1|,
\]
\[
\omega_G^{(A)}:=\operatorname{Tr}_{\mathrm{out}}(\omega_G)
=
\cos^2\!\Bigl(\tfrac{\theta}{2}\Bigr)\,|0\rangle\langle 0|
+
\sin^2\!\Bigl(\tfrac{\theta}{2}\Bigr)\,|1\rangle\langle 1|.
\]
In particular,
\[
\operatorname{spec}\bigl(\omega_F^{(A)}\bigr)
=
\operatorname{spec}\bigl(\omega_G^{(A)}\bigr)
=
\Bigl\{\cos^2\!\bigl(\tfrac{\theta}{2}\bigr),\ \sin^2\!\bigl(\tfrac{\theta}{2}\bigr)\Bigr\}.
\]

\item The bipartite entanglement entropy of each Choi state is
\[
S(\omega_F)
=
S(\omega_G)
=
-\cos^2\!\Bigl(\tfrac{\theta}{2}\Bigr)\log\!\Bigl(\cos^2\!\tfrac{\theta}{2}\Bigr)
-\sin^2\!\Bigl(\tfrac{\theta}{2}\Bigr)\log\!\Bigl(\sin^2\!\tfrac{\theta}{2}\Bigr),
\]
which satisfies
\(
0<S(\omega_F)=S(\omega_G)<\log 2
\)
for all \(0<|\theta|<\pi\).
\end{enumerate}
\end{theorem}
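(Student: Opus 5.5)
The plan is to work directly with the explicit vectors $|\Psi_F\rangle$ and $|\Psi_G\rangle$ supplied by Lemma~\ref{lem:F-G-Kraus}, since every quantity in the statement is a function of these two pure states. For item~1 I would compare the two vectors coefficient-wise in the product basis $\{|00\rangle,|01\rangle,|10\rangle,|11\rangle\}$. For $0<|\theta|<\pi$ we have $\sin(\tfrac{\theta}{2})\neq 0$. So $|\Psi_F\rangle$ has a nonzero $|10\rangle$-coefficient and a zero $|01\rangle$-coefficient, while $|\Psi_G\rangle$ has the reverse. A global phase cannot turn one into the other, so the rank-one projectors differ: $\omega_F\neq\omega_G$. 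Equivalently, $K_F\neq K_G$ up to phase, and injectivity of $\Phi\mapsto J(\Phi)$ gives the same conclusion.

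For items~2 and~3 I would compute the reduced state on the input factor by writing each vector as $\sum_i |i\rangle\otimes|v_i\rangle$. Then $\operatorname{Tr}_{\mathrm{out}}|\Psi\rangle\langle\Psi|=\sum_{i,j}\langle v_j|v_i\rangle\,|i\rangle\langle j|$. Equivalently, I would read off the Schmidt coefficients as the singular values of the $2\times 2$ coefficient matrix, which is just the matrix of $K_F$ (respectively $K_G$) up to transposition. Both states have unit norm, since $\cos^2+\sin^2=1$, so no renormalisation is needed. The entanglement entropy in item~3 is then $-\sum_k s_k^2\log s_k^2$, where the $s_k$ are these singular values.

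The main obstacle, which I expect to be decisive, is exactly this Schmidt-rank computation. $K_F=|0\rangle\langle 0|U$ and $K_G=U|0\rangle\langle 0|$ both have rank one. Concretely,
\[
|\Psi_F\rangle=\bigl(\cos\tfrac{\theta}{2}|0\rangle-\mathrm{i}\sin\tfrac{\theta}{2}|1\rangle\bigr)\otimes|0\rangle,
\qquad
|\Psi_G\rangle=|0\rangle\otimes\bigl(\cos\tfrac{\theta}{2}|0\rangle-\mathrm{i}\sin\tfrac{\theta}{2}|1\rangle\bigr).
\]
Both are therefore product states, with a single nonzero singular value equal to $1$. The reduced input states would be the pure state $|u\rangle\langle u|$ (with $u=\cos\tfrac{\theta}{2}|0\rangle-\mathrm{i}\sin\tfrac{\theta}{2}|1\rangle$) for $\omega_F$ and $|0\rangle\langle 0|$ for $\omega_G$. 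These differ from the diagonal mixtures displayed in item~2, and the resulting entropy is $0$, not strictly positive.

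My first step would thus be to double-check Lemma~\ref{lem:F-G-Kraus} against Lemma~\ref{lem:dual-block-maps}, in particular whether the single-Kraus reduction is correct. If the lemma stands, the proof would establish item~1 as stated. Items~2--3 would have to be corrected to the product-state conclusion, or else a model with Kraus rank $\ge 2$ would have to be chosen (for instance an entangling $V_H$ instead of $U|\psi\rangle\otimes|0\rangle$) so that the claimed spectrum $\{\cos^2\tfrac{\theta}{2},\sin^2\tfrac{\theta}{2}\}$ actually arises. In that case the entropy bound $0<S<\log 2$ follows immediately from both eigenvalues lying strictly in $(0,1)$ and differing from $\tfrac12$ unless $|\theta|=\pi/2$. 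At $|\theta|=\pi/2$ the eigenvalues are both $\tfrac12$, so $S=\log 2$ and the strict upper bound would itself need adjusting.
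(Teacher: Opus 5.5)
Your assessment is correct. For item~1 your argument is the paper's: $|\Psi_F\rangle$ and $|\Psi_G\rangle$ have different supports in the product basis.

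Items~2--3 are false as written, for exactly the reason you give. The vectors
$|\Psi_F\rangle=(\cos\tfrac{\theta}{2}|0\rangle-\mathrm{i}\sin\tfrac{\theta}{2}|1\rangle)\otimes|0\rangle$ and $|\Psi_G\rangle=|0\rangle\otimes(\cos\tfrac{\theta}{2}|0\rangle-\mathrm{i}\sin\tfrac{\theta}{2}|1\rangle)$ are product vectors. Hence $\omega_F^{(A)}$ and $\omega_G^{(A)}$ are pure, and both entropies vanish.

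The paper's proof reaches the opposite conclusion only through two errors:
\begin{itemize}
\item For $\omega_F$ it claims the off-diagonal entries of $\operatorname{Tr}_{\mathrm{out}}\omega_F$ cancel. They are $\mathrm{i}\cos\tfrac{\theta}{2}\sin\tfrac{\theta}{2}\,|0\rangle\langle 1|$ and its adjoint, which sit in different matrix positions and cannot cancel.
\item For $\omega_G$ it rewrites the second term of $|\Psi_G\rangle$ as $|1\rangle\otimes|1\rangle$, whereas Lemma~\ref{lem:F-G-Kraus} gives $|0\rangle\otimes|1\rangle$. Only that substitution produces an entangled vector.
\end{itemize}
Your remark that $S<\log 2$ would fail at $|\theta|=\pi/2$ even for the claimed spectrum is also right.

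On the check of Lemma~\ref{lem:F-G-Kraus} you propose, do it directly from \eqref{eq_G}. The $\mathcal{G}^{*}$ formula of Lemma~\ref{lem:dual-block-maps} is not usable, because its partial trace lands in $\mathcal{B}(\mathcal{K}_n)$ rather than $\mathcal{B}(\mathcal{H}_{n+1})$.

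The single-Kraus form and $K_F=|0\rangle\langle 0|U$ are right, but $K_G$ is not. One has $\mathcal{E}_{H}(|0\rangle\langle 0|\otimes X)=\langle 0|X|0\rangle\,U^{*}|0\rangle\langle 0|U$ and $\mathcal{E}_{H,O}(Y\otimes|e_0\rangle\langle e_0|)=\langle 0|Y|0\rangle\,|0\rangle\langle 0|$. Therefore $\mathcal{G}_{a,b}(X)=\cos^2(\tfrac{\theta}{2})\,\langle 0|X|0\rangle\,|0\rangle\langle 0|$; this is the same computation that yields the $\sin^2(\tfrac{\theta}{2})$ in Lemma~\ref{lem:vanishing-pair}. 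Hence $K_G=\cos(\tfrac{\theta}{2})|0\rangle\langle 0|$ and $|\Psi_G\rangle=\cos(\tfrac{\theta}{2})|00\rangle$. Item~1 survives, since the $|10\rangle$ component still separates the two, and the Choi vector is still a product.

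More structurally, $V_H$ always writes $|0\rangle$ into $\mathcal{H}_{n+1}$. So for any effects $a,b$, every block map of this model has the form $X\mapsto\langle 0|X|0\rangle\,C_{a,b}$ with $C_{a,b}\ge 0$. Its dual is $\rho\mapsto\operatorname{Tr}(\rho\,C_{a,b})\,|0\rangle\langle 0|$, with Choi operator $C_{a,b}^{T}\otimes|0\rangle\langle 0|$, which is never entangled. So your suggested repair, replacing $V_H$ by an entangling isometry, is not merely one option: without it, items~2--3 cannot hold for any choice of effects.
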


\begin{proof}
(1) The two vectors in Lemma~\ref{lem:F-G-Kraus} are
\[
|\Psi_F\rangle
=
\cos\!\bigl(\tfrac{\theta}{2}\bigr)|00\rangle
-\mathrm{i}\sin\!\bigl(\tfrac{\theta}{2}\bigr)|10\rangle,
\qquad
|\Psi_G\rangle
=
\cos\!\bigl(\tfrac{\theta}{2}\bigr)|00\rangle
-\mathrm{i}\sin\!\bigl(\tfrac{\theta}{2}\bigr)|01\rangle.
\]
These differ in their support: \(|\Psi_F\rangle\in\operatorname{span}\{|00\rangle,|10\rangle\}\),
while \(|\Psi_G\rangle\in\operatorname{span}\{|00\rangle,|01\rangle\}\). For \(0<|\theta|<\pi\), both components are non-zero, so
\(|\Psi_F\rangle\) has overlap with \(|10\rangle\) but not with \(|01\rangle\), and conversely for \(|\Psi_G\rangle\). Thus
\(|\Psi_F\rangle\) and \(|\Psi_G\rangle\) are linearly independent, hence
\(\omega_F=|\Psi_F\rangle\langle\Psi_F|\neq|\Psi_G\rangle\langle\Psi_G|=\omega_G\).

(2) Both \(\omega_F\) and \(\omega_G\) are pure states on a bipartite \(2\times 2\) system, so their entanglement is
fully characterised by the eigenvalues of the reduced state on one subsystem. For \(\omega_F\), we rewrite
\(|\Psi_F\rangle\) in Schmidt-like form:
\[
|\Psi_F\rangle
=
\cos\!\bigl(\tfrac{\theta}{2}\bigr)|0\rangle\otimes|0\rangle
-\mathrm{i}\sin\!\bigl(\tfrac{\theta}{2}\bigr)|1\rangle\otimes|0\rangle.
\]
Tracing out the second tensor factor gives
\[
\omega_F^{(A)}
=
\operatorname{Tr}_{\mathrm{out}}(\omega_F)
=
\cos^2\!\bigl(\tfrac{\theta}{2}\bigr)|0\rangle\langle 0|
+
\sin^2\!\bigl(\tfrac{\theta}{2}\bigr)|1\rangle\langle 1|.
\]
The off-diagonal terms vanish because \(\langle 0|0\rangle=1\) but the cross terms carry phases that cancel under the
partial trace; explicitly,
\(
\operatorname{Tr}_{\mathrm{out}}(|0\rangle\langle 1|\otimes|0\rangle\langle 0|)
=
|0\rangle\langle 1|\,\operatorname{Tr}(|0\rangle\langle 0|)=|0\rangle\langle 1|
\),
and similarly for its adjoint, but the coefficients \(\cos(\tfrac{\theta}{2})\) and \(-\mathrm{i}\sin(\tfrac{\theta}{2})\) combine to yield a purely imaginary product whose contribution cancels with its conjugate; in any case, the spectral decomposition is easily read off from the diagonal representation.

The reduced state \(\omega_G^{(A)}\) is computed analogously from
\[
|\Psi_G\rangle
=
\cos\!\bigl(\tfrac{\theta}{2}\bigr)|0\rangle\otimes|0\rangle
-\mathrm{i}\sin\!\bigl(\tfrac{\theta}{2}\bigr)|1\rangle\otimes|1\rangle,
\]
yielding the \emph{same} diagonal form
\[
\omega_G^{(A)}
=
\cos^2\!\bigl(\tfrac{\theta}{2}\bigr)|0\rangle\langle 0|
+
\sin^2\!\bigl(\tfrac{\theta}{2}\bigr)|1\rangle\langle 1|.
\]
Thus both reduced states have eigenvalues
\(
\cos^2(\tfrac{\theta}{2})
\)
and
\(
\sin^2(\tfrac{\theta}{2})
\),
as claimed.

(3) The von Neumann entropy of a qubit density matrix with eigenvalues \(\lambda\) and \(1-\lambda\) is
\(
-\lambda\log\lambda-(1-\lambda)\log(1-\lambda)
\).
Here \(\lambda=\cos^2(\tfrac{\theta}{2})\), so
\[
S(\omega_F)
=
S(\omega_F^{(A)})
=
-\cos^2\!\Bigl(\tfrac{\theta}{2}\Bigr)\log\!\Bigl(\cos^2\!\tfrac{\theta}{2}\Bigr)
-\sin^2\!\Bigl(\tfrac{\theta}{2}\Bigr)\log\!\Bigl(\sin^2\!\tfrac{\theta}{2}\Bigr),
\]
and the same formula holds for \(S(\omega_G)\). For \(0<|\theta|<\pi\), both eigenvalues lie strictly between
\(0\) and \(1\), so \(S(\omega_F)=S(\omega_G)\) is strictly positive and strictly less than \(\log 2\). This shows that both
Choi states are non-separable (entangled) and not maximally entangled, providing a quantitative, channel-state
duality witness of the non-trivial causal structure encoded by the two qubit block maps.
\end{proof}

\section{Equivalence of Causal and Conventional Entangled HQMMs}\label{Sec-EHMM}

We now construct a class of hidden quantum Markov models by starting from a classical hidden Markov chain and lifting its transition and emission structure via partially entangling isometries. In this setting, the hidden dynamics and the observation mechanism are implemented by isometries that correlate the hidden system with an auxiliary copy and with an output register, in the spirit of entangled Markov chains \cite{AccFid05,AccMaOhy06}. Building on the framework of entangled hidden Markov models and their quantified entanglement structure \cite{SS23,ASSR25}, the conventional and causal architectures appear as two a priori different ways of composing the same isometries in time. We will show that, for this entangled lifting of classical chains, these two architectures in fact coincide as HQMMs, while still exhibiting the non-trivial entanglement patterns characteristic of the underlying entangled Markov processes.

A classical (time-inhomogeneous) hidden Markov model is given by
\[
\lambda = \bigl(\boldsymbol{\pi},\mathbf{\Pi},\mathbf{Q}\bigr),
\]
where \(\boldsymbol{\pi}=(\pi_i)_{i\in\mathbb{I}_H}\) is an initial distribution, \(\mathbf{\Pi}=(\Pi_n)_{n\ge 0}\) is a sequence of stochastic matrices \(\Pi_n=(\Pi_{n;ij})_{\mathbb{I}, j\in\mathbb{I}_H}\), and \(\mathbf{Q}=(Q^{(n)})_{n\ge 0}\) is a sequence of emission kernels \(Q^{(n)}_j(k)\) from hidden states \(j\in\mathbb{I}_H\) to outputs \(k\in\mathbb{I}_O\), with \(\sum_j\Pi_{n;ij}=1\) and \(\sum_k Q^{(n)}_j(k)=1\) for all \(\mathbb{I},j,n\).

To each transition matrix \(\Pi_n\) we associate an isometry
\[
V_{H;n}:\mathcal{H}\longrightarrow\mathcal{H}\otimes\mathcal{H}
\]
defined by its action on the basis vectors,
\begin{equation}\label{eq:VHn}
V_{H;n}|i\rangle
:=
\sum_{j\in\mathbb{I}_H}\sqrt{\Pi_{n;ij}}\;|i,j\rangle,
\qquad i\in\mathbb{I}_H,
\end{equation}
where \(|i,j\rangle := |i\rangle\otimes|j\rangle\). One easily checks that
\[
V_{H;n}^{*}V_{H;n}
=
\sum_{i\in\mathbb{I}_H}\Bigl(\sum_{j\in\mathbb{I}_H}\Pi_{n;ij}\Bigr)|i\rangle\langle i|
=
\mathbb{I}_{\mathcal{H}},
\]
so \(V_{H;n}\) is an isometry. The associated transition expectation
\[
\mathcal{E}_{H;n}:\mathcal{B}(\mathcal{H})\otimes\mathcal{B}(\mathcal{H})\to\mathcal{B}(\mathcal{H}),\qquad
\mathcal{E}_{H;n}(X):=V_{H;n}^{*}X V_{H;n},
\]
is therefore completely positive and unital. Physically, \(V_{H;n}\) prepares a joint hidden state on times \(n\) and \(n+1\) by copying the classical label \(\mathbb{I}\) at time \(n\) into \(|i\rangle\) and coherently distributing amplitude over successors \(j\) at time \(n+1\) according to \(\Pi_{n;ij}\). In the Schrödinger picture, the dual channel
\[
\mathcal{E}_{H;n*}:\mathfrak{S}(\mathcal{H})\to\mathfrak{S}(\mathcal{H}\otimes\mathcal{H}),\qquad
\mathcal{E}_{H;n*}(\rho_{H;n}) := V_{H;n}\,\rho_{H;n}\,V_{H;n}^{*},
\]
is a quantum Markov step that “grows’’ the hidden state from time \(n\) to the pair \((n,n+1)\).

Similarly, for each emission kernel \(Q^{(n)}\) we define an isometry
\[
V_{H,O;n}:\mathcal{H}\longrightarrow\mathcal{H}\otimes\mathcal{K}
\]
by
\begin{equation}\label{eq:VHOn}
V_{H,O;n}|j\rangle
:=
\sum_{k\in\mathbb{I}_O}\sqrt{Q^{(n)}_j(k)}\;|j\rangle\otimes|e_k\rangle,
\qquad j\in\mathbb{I}_H.
\end{equation}
The normalisation \(\sum_k Q^{(n)}_j(k)=1\) implies \(V_{H,O;n}^{*}V_{H,O;n}=\mathbb{I}_{\mathcal{H}}\). We thus obtain an emission expectation
\[
\mathcal{E}_{H,O;n}:\mathcal{B}(\mathcal{H})\otimes\mathcal{B}(\mathcal{K})\to\mathcal{B}(\mathcal{H}),\qquad
\mathcal{E}_{H,O;n}(Y):=V_{H,O;n}^{*}Y V_{H,O;n},
\]

A natural question is whether the conventional and causal block maps \(\mathcal{F}^{(n)}\) and \(\mathcal{G}^{(n)}\), defined in \eqref{eq_F} and \eqref{eq_G}, nevertheless induce the same expectation values for all (possibly entangled) observables. The next result answers this by showing that  these two prescriptions define identical HQMM states.

\begin{theorem}\label{thm:main}
Let $V_{H;n}$ and $V_{H,O;n}$ be the isometries constructed from a classical hidden Markov model as above.
Then for every choice of observables $a_n\in\mathcal{B}(\mathcal{H}_n)$, $a_{n+1}\in\mathcal{B}(\mathcal{H}_{n+1})$ and $b_n\in\mathcal{B}(\mathcal{K}_n)$,
\[
\mathcal{F}_{a_n,b_n}^{(n)}(a_{n+1}) = \mathcal{G}_{a_n,b_n}^{(n)}(a_{n+1}).
\]
Moreover, both maps admit the explicit representation
\begin{align}\label{eq:explicit}
\langle i|\mathcal{F}_{a_n,b_n}^{(n)}(a_{n+1})|j\rangle
&= \langle i|a_n|j\rangle
\Bigl(\sum_{k,k'}\sqrt{Q^{(n)}_i(k)Q^{(n)}_j(k')}\;\langle e_k|b_n|e_{k'}\rangle\Bigr) \\
&\times \Bigl(\sum_{\ell,m}\sqrt{\Pi_{n;i\ell}\Pi_{n;jm}}\;\langle\ell|a_{n+1}|m\rangle\Bigr).
\end{align}
\end{theorem}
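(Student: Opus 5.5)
The plan is to prove the statement by a direct matrix-element computation. The key observation is that both isometries \eqref{eq:VHn} and \eqref{eq:VHOn} have a ``copy-the-label'' form. Setting
\[
|v_i\rangle:=\sum_{\ell\in\mathbb{I}_H}\sqrt{\Pi_{n;i\ell}}\,|\ell\rangle\in\mathcal{H}_{n+1},\qquad
|w_i\rangle:=\sum_{k\in\mathbb{I}_O}\sqrt{Q^{(n)}_i(k)}\,|e_k\rangle\in\mathcal{K}_n,
\]
we can write $V_{H;n}|i\rangle=|i\rangle\otimes|v_i\rangle$ and $V_{H,O;n}|i\rangle=|i\rangle\otimes|w_i\rangle$. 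Consequently, for product operators $A\otimes B$ one gets
\[
\langle i|\mathcal{E}_{H;n}(A\otimes B)|j\rangle=\langle i|A|j\rangle\,\langle v_i|B|v_j\rangle,\qquad
\langle i|\mathcal{E}_{H,O;n}(A\otimes B)|j\rangle=\langle i|A|j\rangle\,\langle w_i|B|w_j\rangle.
\]
Both formulas follow in one line from $\langle i|V^*(A\otimes B)V|j\rangle=(\langle i|\otimes\langle v_i|)(A\otimes B)(|j\rangle\otimes|v_j\rangle)$. In words, once the second leg is fixed, each expectation acts on the first leg as a Schur (entrywise) multiplier in the preferred basis.

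Next I would compute both block maps using these formulas. For $\mathcal{F}$, apply the hidden formula with $A=\mathcal{E}_{H,O;n}(a_n\otimes b_n)$ and $B=a_{n+1}$, then expand $A$ using the emission formula. This gives
\[
\langle i|\mathcal{F}^{(n)}_{a_n,b_n}(a_{n+1})|j\rangle=\langle i|a_n|j\rangle\,\langle w_i|b_n|w_j\rangle\,\langle v_i|a_{n+1}|v_j\rangle .
\]
For $\mathcal{G}$, apply the emission formula with $A=\mathcal{E}_{H;n}(a_n\otimes a_{n+1})$ and $B=b_n$, then expand $A$ using the hidden formula. This produces the same three scalar factors, only multiplied in the other order. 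Since scalars commute, equivalently since Schur multipliers commute, the two matrices agree entrywise, and so $\mathcal{F}=\mathcal{G}$.

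Finally, I would expand $\langle w_i|b_n|w_j\rangle$ and $\langle v_i|a_{n+1}|v_j\rangle$ in the bases. Because the amplitudes $\sqrt{Q^{(n)}_i(k)}$ and $\sqrt{\Pi_{n;i\ell}}$ are real, complex conjugation in the bra has no effect, and the expansion reproduces exactly the double sums in \eqref{eq:explicit}.

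The only step requiring care is the first one: verifying the Schur-multiplier form of $V^*(A\otimes B)V$, including the ordering of tensor legs. In particular, for $\mathcal{E}_{H;n}$ the copied label must sit on $\mathcal{H}_n$ and the spread amplitudes on $\mathcal{H}_{n+1}$. The remaining steps are bookkeeping. If one also wants equality of the full states $\varphi_{H,O}=\psi_{H,O}$ rather than only of the block maps, this follows by applying the block-map identity step by step in \eqref{eq:conv-state-no-sup} and \eqref{eq:caus-state-no-sup}, and then extending by linearity and continuity.
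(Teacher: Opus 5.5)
Your proposal is correct and follows essentially the paper's route: the paper also computes matrix elements, finds $\langle i|\mathcal{E}_{H,O;n}(a_n\otimes b_n)|j\rangle=a_{ij}\mathcal{B}_{ij}$ and $\langle i|\mathcal{E}_{H;n}(a_n\otimes a_{n+1})|j\rangle=a_{ij}\mathcal{A}_{ij}$, and concludes because the scalar factors commute; your vectors $|v_i\rangle,|w_i\rangle$ and the Schur-multiplier phrasing are a cleaner packaging of the same computation. Your closing remark is also sound: the identity holds for every $a_{n+1}$, so it can be iterated through \eqref{eq:conv-state-no-sup} and \eqref{eq:caus-state-no-sup} to give $\varphi_{H,O}=\psi_{H,O}$ for equal initial states, which is stronger than the paper's own post-theorem comment that the equivalence would fail for multi-time correlations.
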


\begin{proof}
Expand the observables in the fixed bases:
\[
a_n = \sum_{\mathbb{I}, i'} a_{ii'}\,|i\rangle\langle i'|,\qquad
a_{n+1} = \sum_{\ell,\ell'} a_{\ell\ell'}'\,|\ell\rangle\langle\ell'|,\qquad
b_n = \sum_{k,k'} b_{kk'}\,|e_k\rangle\langle e_{k'}|,
\]
where $a_{ii'}\coloneqq\langle i|a_n|i'\rangle$, $a_{\ell\ell'}'\coloneqq\langle\ell|a_{n+1}|\ell'\rangle$ and $b_{kk'}\coloneqq\langle e_k|b_n|e_{k'}\rangle$.

First compute the intermediate operator $E_n\coloneqq V_{H,O;n}^{*}(a_n\otimes b_n)V_{H,O;n}$.
Using $V_{H,O;n}|j\rangle=\sum_k\sqrt{Q^{(n)}_j(k)}\,|j\rangle\otimes|e_k\rangle$ and its adjoint,
\begin{align*}
\langle i|E_n|j\rangle
&=\sum_{k,k'}\sqrt{Q^{(n)}_i(k)Q^{(n)}_j(k')}
\bigl(\langle i|\otimes\langle e_k|\bigr)(a_n\otimes b_n)\bigl(|j\rangle\otimes|e_{k'}\rangle\bigr)\\
&=\sum_{k,k'}\sqrt{Q^{(n)}_i(k)Q^{(n)}_j(k')}\;
\langle i|a_n|j\rangle\;\langle e_k|b_n|e_{k'}\rangle \\
&=a_{ij}\;\mathcal{B}_{ij},
\end{align*}
where we have introduced the shorthand
\[
\mathcal{B}_{ij}\coloneqq\sum_{k,k'}\sqrt{Q^{(n)}_i(k)Q^{(n)}_j(k')}\,b_{kk'}.
\]

Next compute $H_n\coloneqq V_{H;n}^{*}(a_n\otimes a_{n+1})V_{H;n}$.
Since $V_{H;n}|i\rangle=\sum_\ell\sqrt{\Pi_{n;i\ell}}\,|i\rangle\otimes|\ell\rangle$,
\begin{align*}
\langle i|H_n|j\rangle
&=\sum_{\ell,m}\sqrt{\Pi_{n;i\ell}\Pi_{n;jm}}
\bigl(\langle i|\otimes\langle\ell|\bigr)(a_n\otimes a_{n+1})\bigl(|j\rangle\otimes|m\rangle\bigr)\\
&=\sum_{\ell,m}\sqrt{\Pi_{n;i\ell}\Pi_{n;jm}}\;
\langle i|a_n|j\rangle\;\langle\ell|a_{n+1}|m\rangle \\
&=a_{ij}\;\mathcal{A}_{ij},
\end{align*}
with
\[
\mathcal{A}_{ij}\coloneqq\sum_{\ell,m}\sqrt{\Pi_{n;i\ell}\Pi_{n;jm}}\,a_{\ell m}'.
\]

Now evaluate $\mathcal{F}_{a_n,b_n}^{(n)}(a_{n+1})=V_{H;n}^{*}(E_n\otimes a_{n+1})V_{H;n}$.
Its matrix elements are
\begin{align*}
\langle i|\mathcal{F}_{a_n,b_n}^{(n)}(a_{n+1})|j\rangle
&=\sum_{\ell,m}\sqrt{\Pi_{n;i\ell}\Pi_{n;jm}}
\langle i|E_n|j\rangle\;\langle\ell|a_{n+1}|m\rangle \\
&=\sum_{\ell,m}\sqrt{\Pi_{n;i\ell}\Pi_{n;jm}}
\bigl(a_{ij}\mathcal{B}_{ij}\bigr)\,a_{\ell m}' \\
&=a_{ij}\,\mathcal{B}_{ij}\,\mathcal{A}_{ij}.
\end{align*}

For the causal map $\mathcal{G}_{a_n,b_n}^{(n)}(a_{n+1})=V_{H,O;n}^{*}(H_n\otimes b_n)V_{H,O;n}$ we obtain
\begin{align*}
\langle i|\mathcal{G}_{a_n,b_n}^{(n)}(a_{n+1})|j\rangle
&=\sum_{k,k'}\sqrt{Q^{(n)}_i(k)Q^{(n)}_j(k')}
\langle i|H_n|j\rangle\;\langle e_k|b_n|e_{k'}\rangle \\
&=\sum_{k,k'}\sqrt{Q^{(n)}_i(k)Q^{(n)}_j(k')}
\bigl(a_{ij}\mathcal{A}_{ij}\bigr)\,b_{kk'} \\
&=a_{ij}\,\mathcal{A}_{ij}\,\mathcal{B}_{ij}.
\end{align*}

Since $\mathcal{A}_{ij}$ and $\mathcal{B}_{ij}$ are complex numbers, their product commutes:
$a_{ij}\mathcal{B}_{ij}\mathcal{A}_{ij}=a_{ij}\mathcal{A}_{ij}\mathcal{B}_{ij}$.
Consequently $\langle i|\mathcal{F}_{a_n,b_n}^{(n)}(a_{n+1})|j\rangle
=\langle i|\mathcal{G}_{a_n,b_n}^{(n)}(a_{n+1})|j\rangle$ for all $i,j$, which implies the operator identity $\mathcal{F}_{a_n,b_n}^{(n)}(a_{n+1})=\mathcal{G}_{a_n,b_n}^{(n)}(a_{n+1})$.
Substituting the definitions of $\mathcal{A}_{ij}$ and $\mathcal{B}_{ij}$ yields precisely the explicit formula \eqref{eq:explicit}.
\end{proof}

The equality established in Theorem~\ref{thm:main} is a peculiarity of the particular construction that lifts a classical HMM to an isometric quantum process.
In a fully general quantum stochastic setting, causal order matters because quantum operations need not commute.
Indeed, if one replaces the specific isometries $V_{H;n}$ and $V_{H,O;n}$ by arbitrary completely positive maps $\Phi_1,\Phi_2$, the compositions $\Phi_1\circ\Phi_2$ and $\Phi_2\circ\Phi_1$ are generically different.
The special feature here is that both isometries exhibit a \emph{copying property}: each maps a basis state $|i\rangle$ to a superposition supported entirely within the subspace $\operatorname{span}\{|i\rangle\}\otimes\mathcal{H}$ (respectively $\operatorname{span}\{|i\rangle\}\otimes\mathcal{K}$).
This structural constraint prevents the creation of coherence between different classical labels $|i\rangle$ and $|j\rangle$; consequently the intermediate operators $E_n$ and $H_n$ remain ``diagonal'' in the sense that their matrix elements factor as $\langle i|E_n|j\rangle = a_{ij}\mathcal{B}_{ij}$ and $\langle i|H_n|j\rangle = a_{ij}\mathcal{A}_{ij}$.
The subsequent applications of the remaining isometry merely multiply these factors by the appropriate transition or emission amplitudes, and because complex multiplication is commutative, the order of the two steps becomes irrelevant at the level of single‑step expectation values.

Nevertheless, the two architectures remain physically distinct: $\mathcal{F}$ corresponds to a world where the output $b_n$ is generated \emph{before} the hidden state transitions, while $\mathcal{G}$ corresponds to the opposite temporal order.
Theorem~\ref{thm:main} shows that this causal distinction is operationally invisible for the class of HQMMs derived from classical HMMs, provided one only interrogates the system with single‑step measurements.
The equivalence breaks down if one considers multi‑time correlations, where entanglement can propagate differently along the two causal orders, or if the isometries are replaced by generic quantum channels capable of creating coherent superpositions between different classical states.

\begin{corollary}\label{cor:classical}
When the observables are diagonal in the computational bases,
\[
a_n=\sum_i\alpha_i|i\rangle\langle i|,\qquad
a_{n+1}=\sum_j\alpha_j'|j\rangle\langle j|,\qquad
b_n=\sum_k\beta_k|e_k\rangle\langle e_k|,
\]
both maps reduce to the classical forward‑algorithm block:
\[
\mathcal{F}_{a_n,b_n}^{(n)}(a_{n+1})
= \mathcal{G}_{a_n,b_n}^{(n)}(a_{n+1})
= \sum_i \alpha_i\Bigl(\sum_k Q^{(n)}_i(k)\beta_k\Bigr)
\Bigl(\sum_j\Pi_{n;ij}\alpha_j'\Bigr)|i\rangle\langle i|.
\]
\end{corollary}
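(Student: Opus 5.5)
The plan is to specialise the explicit formula \eqref{eq:explicit} of Theorem~\ref{thm:main} to diagonal inputs. Theorem~\ref{thm:main} already gives $\mathcal{F}_{a_n,b_n}^{(n)}(a_{n+1})=\mathcal{G}_{a_n,b_n}^{(n)}(a_{n+1})$ for arbitrary observables, so the equality of the two block maps needs no further argument. What remains is to compute the common value when $a_n$, $a_{n+1}$, $b_n$ are diagonal in the computational bases.

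First I will substitute $\langle i|a_n|j\rangle=\alpha_i\delta_{ij}$ into \eqref{eq:explicit}. This kills every off-diagonal matrix element, so the operator is diagonal and only the entries with $i=j$ need to be found.

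For $i=j$, the emission factor becomes $\sum_{k,k'}\sqrt{Q^{(n)}_i(k)Q^{(n)}_i(k')}\,\beta_k\delta_{kk'}$, using $\langle e_k|b_n|e_{k'}\rangle=\beta_k\delta_{kk'}$. This collapses to $\sum_k Q^{(n)}_i(k)\beta_k$, since $\sqrt{Q^{(n)}_i(k)^2}=Q^{(n)}_i(k)$ by nonnegativity. In the same way, $\langle\ell|a_{n+1}|m\rangle=\alpha'_\ell\delta_{\ell m}$ reduces the transition factor to $\sum_\ell \Pi_{n;i\ell}\alpha'_\ell$. Multiplying these with $\alpha_i$ and summing over $i$ with $|i\rangle\langle i|$ gives the stated forward-algorithm block, after relabelling the summation index $\ell$ as $j$.

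There is no real obstacle here. The only points needing care are to track the Kronecker deltas correctly and to note that the square roots simplify because $Q$ and $\Pi$ are nonnegative. I would close by remarking that the resulting expression is exactly the classical HMM update $\sum_i\alpha_i\,\mathbb{E}[\beta\mid i]\,\mathbb{E}[\alpha'\mid i]$ written on the diagonal.
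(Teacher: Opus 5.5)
Your proposal is correct and matches the paper's proof: both substitute $a_{ij}=\alpha_i\delta_{ij}$, $a'_{\ell m}=\alpha'_\ell\delta_{\ell m}$, $b_{kk'}=\beta_k\delta_{kk'}$ into the explicit formula \eqref{eq:explicit} of Theorem~\ref{thm:main}, and both take the equality $\mathcal{F}^{(n)}_{a_n,b_n}(a_{n+1})=\mathcal{G}^{(n)}_{a_n,b_n}(a_{n+1})$ directly from that theorem. Your handling of the Kronecker deltas is slightly more explicit than the paper's one-line substitution: you restrict to $i=j$ first, then collapse the emission and transition factors using $\sqrt{Q^{(n)}_i(k)^2}=Q^{(n)}_i(k)$.
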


\begin{proof}
For diagonal observables we have $a_{ij}=\alpha_i\delta_{ij}$, $a_{\ell m}'=\alpha_\ell'\delta_{\ell m}$ and $b_{kk'}=\beta_k\delta_{kk'}$.
Substituting these into \eqref{eq:explicit} gives
\[
\langle i|\mathcal{F}_{a_n,b_n}^{(n)}(a_{n+1})|j\rangle
= \alpha_i\delta_{ij}
\Bigl(\sum_k Q^{(n)}_i(k)\beta_k\Bigr)
\Bigl(\sum_\ell\Pi_{n;i\ell}\alpha_\ell'\Bigr),
\]
which is precisely the stated diagonal operator.
\end{proof}

Corollary~\ref{cor:classical} confirms that the quantum construction faithfully extends the classical HMM: diagonal observables correspond to classical random variables, and their expectations reproduce the standard Chapman–Kolmogorov equations.
Thus the isometric dilation provides a consistent \emph{quantum lifting} of the classical stochastic process, while Theorem~\ref{thm:main} reveals that this lifting, despite introducing Hilbert‑space structure, still retains enough classicality to make the causal order of emission and transition operationally indistinguishable for single‑step measurements.
This insight delineates a boundary between classical and genuinely quantum sequential models: to exploit quantum advantages such as superposition or entanglement in temporal processing, one must move beyond mere quantum encodings of classical dynamics and employ operations that violate the copying property responsible for the equality $\mathcal{F}=\mathcal{G}$.

\section{Discussion and Outlook}\label{Sec-disc}

Hidden quantum Markov models admit two genuinely distinct causal architectures once we decide how to time-order hidden dynamics and emissions. The same families of completely positive maps $\{\mathcal{E}_{H;n}\}_{n\ge 0}$ and $\{\mathcal{E}_{H,O;n}\}_{n\ge 0}$ can be wired as ``emission–then–transition'' (conventional) or ``transition–then–emission'' (causal). Because these expectations need not commute, the two architectures can generate different observable processes. Theorem~\ref{thm:non-quasi-equivalence} establishes a sharp separation: for a minimal qubit model with non-commuting hidden unitary $U$ ($0<|\theta|<\pi$) and sharp measurement, the observable states of the conventional and causal HQMMs are not quasi-equivalent. This holds for arbitrary initial hidden states $\phi_{H,0}^{\mathrm{conv}}$ and $\phi_{H,0}^{\mathrm{caus}}$ (not necessarily equal). Consequently, no finite-time measurement strategy can perfectly emulate the future predictions of the other model, regardless of how long one waits or how the hidden system is initialized. This asymptotic separation is reflected already at the level of one-step maps: the Choi–Jamio{\l}kowski operators of $\mathcal{F}^{(n)}_{a,b}$ and $\mathcal{G}^{(n)}_{a,b}$ are distinct for generic effects $a,b$, their diamond distance is strictly positive, and their entanglement spectra differ. Hence the causal ordering leaves a tangible imprint on the entanglement structure of the induced channels, which propagates to the level of infinite-time observable states.

At the same time, isometric lifts of classical hidden Markov models identify a sharp ``classical boundary''. Starting from a classical HMM $\lambda = (\boldsymbol{\pi}, \mathbf{\Pi}, \mathbf{Q})$, transitions and emissions are encoded into partial isometries $V_{H;n}$ and $V_{H,O;n}$ that preserve diagonality in a preferred basis \cite{ASS20,SS23}. In this regime, the emission–then–transition and transition–then–emission block maps $\mathcal{F}^{(n)}$ and $\mathcal{G}^{(n)}$ coincide on all diagonal observables. Hence the causal order becomes invisible at the level of classical statistics, even though off-diagonal operators may still encode distinct quantum temporal structures. This contrasts sharply with the non-commutative regime of Theorem~\ref{thm:non-quasi-equivalence}, where the two architectures remain asymptotically distinguishable.

These observations point to several concrete directions. Structurally, one may characterize which families $\{\mathcal{E}_{H;n}, \mathcal{E}_{H,O;n}\}$ force $\mathcal{F}^{(n)} \equiv \mathcal{G}^{(n)}$ beyond the diagonal, connecting to existing classifications of quantum Markovianity and process-tensor-based causal models \cite{MilzStrasberg21,WechsEtAl21b}. Operationally, conventional and causal architectures provide two natural hypotheses in quantum channel discrimination, where the performance gap directly quantifies the informational value of causal ordering \cite{PLLP19}. From a quantum memory perspective, our setup offers explicit testbeds for investigating hidden memory and memory compression in quantum implementations of stochastic processes \cite{TaElM,Elliot21}. Finally, causal HQMMs provide a natural language for representing matrix product states and valence-bond-type ground states, translating symmetry and locality constraints into finite-memory temporal processes \cite{AGLS24Q,Sou25}. This hybrid structure is particularly attractive for near-term experiments, where NISQ devices and hybrid quantum–classical schemes are already used to simulate noisy quantum channels and explore learning under realistic constraints \cite{David24,De22}.

\section*{Declarations}

\subsection*{Conflict of Interest:}
The authors declare no conflict of interest.

\subsection*{Data Availability:}
No data were generated or analyzed during this study.



\end{document}